\newtheorem{theorem}{Theorem}
\newtheorem{lemma}[theorem]{Lemma}
\theoremstyle{definition}
\newtheorem{define}[theorem]{Definition}
\newtheorem{remark}[theorem]{Remark}
\begin{document}

	\title{Online Energy-Efficient Scheduling for Timely Information Downloads in Mobile Networks}

%

\author{\IEEEauthorblockN{Yi-Hsuan Tseng and Yu-Pin Hsu} \\
	\IEEEauthorblockA{Department of Communication Engineering \\National Taipie University}\\
	\IEEEauthorblockA{s710681109@gm.ntpu.edu.tw, yupinhsu@mail.ntpu.edu.tw}
	
\thanks{The work was supported by Ministry of Science and Technology, Taiwan, under Grant MOST 107-2221-E-305-007-MY3.} 
	
}

\maketitle

\begin{abstract}
We consider a mobile network where a mobile device is running an application that requires timely information. The information at the device can be updated by downloading the latest information through neighboring access points.  The freshness of the information at the device is characterized by the recently proposed \textit{age of information}. However, minimizing the age of information by frequent downloading increases power consumption of the device. In this context, an energy-efficient scheduling algorithm for timely information downloads is critical,  especially for  power-limited mobile devices.   Moreover,  unpredictable movement of the mobile device causes  uncertainty of the channel dynamics, which is even \textit{non-stationary} within a finite amount of  time for running the application. Thus,  in this paper we devise a \textit{randomized online scheduling algorithm} for mobile devices, which can move arbitrarily and run the application for any amount of time. We show that the expected total cost incurred by the proposed algorithm, including an age cost and a downloading cost, is (asymptotically) at most $e/(e-1) \approx 1.58$ times the minimum  total cost achieved by an optimal offline scheduling algorithm.   
\end{abstract}

\begin{IEEEkeywords}
Age of information, scheduling algorithms, competitive online algorithms.
\end{IEEEkeywords}

\section{Introduction}
In recent years, there is a proliferation of technologies requiring \textit{timely} information.  Examples of the technologies range from \textit{smart} devices (e.g., smartphones) to  \textit{smart} systems (e.g., smart transportation systems). On the one hand, a smartphone would need timely traffic and transportation information for planning the best route. On the other hand, a smart transportation system would need timely information about vehicles' positions and speeds for planning  collision-free transportation. 

In those technologies,   \textit{end devices} (e.g., smartphones and vehicles) can be updated by some remote information sources over wired or wireless networks.  The information kept at the end devices should be as timely as possible to accomplish their  missions (e.g., planning the best route or planning collision-free transportation). Thus, the \textit{age of information} was recently proposed in \cite{age:kaul} to measure the \textit{freshness} of the information at  the end devices.  Interestingly, \cite{age:kaul} claimed that a throughput-optimal design or a delay-optimal design might not achieve the minimum age of information.  Thus, guaranteeing the freshness of information at end devices becomes an issue.

In this paper, we consider the scenario where a mobile device is running an application. The application needs some timely information, which can be downloaded through neighboring access points (APs). However,  frequent downloading for minimizing the age of information  causes the mobile device a critical power issue. To strike a balance between the age of information and the involved power,  the mobile device needs to identify an energy-efficient scheduling algorithm for downloading the latest information. 

However, the mobile user can move at will.  The connectivity between the mobile device and the APs   changes over time; in particular, the resulting channel dynamics is even \textit{non-stationary} because the mobile device might run the application for a short time. The great uncertainty of the channel dynamics poses the major challenge to the scheduling design. In this paper, we  tackle the challenge by developing an \textit{online} energy-efficient scheduling algorithm, requiring no knowledge about  the channel dynamics or the time for running the application.

\subsection{Contributions}

Our main contribution lies in designing and analyzing online scheduling for power-limited mobile devices with unknown movement and unknown time for running the application. The goal  is to minimize a total cost including an age cost and a downloading cost. To reach the goal, we leverage \textit{primal-dual techniques} \cite{online-compuatation-naor} for linear programs.  However,  the methodology cannot be applied immediately because the age of information usually involves  \textit{quadratic terms} (see \cite{arafa2017age} for example), resulting in a non-linear program.  We successfully address the issue by transforming the scheduling problem into an equivalent scheduling problem in a \textit{virtual queueing system}.  With the transformation, an optimal \textit{offline} scheduling algorithm can be obtained using a linear program. Then, applying the primal-dual techniques to the linear program, we propose a \textit{randomized online scheduling algorithm} while showing that the worst-case ratio between the expected total cost incurred by the proposed online  algorithm and that incurred by an optimal offline algorithm is (asymptotically)  $e/(e-1)$.

\subsection{Related works}
The age of information has been analyzed for several queueing models, e.g., \cite{age:kaul,age:Costa,bedewy2017age,yates2018status,najm2018content}; meanwhile,  scheduling  for minimizing the average age of information has also been explored, e.g., \cite{age:he,index:igor,hsu2018age,talak2018optimizing}. Moreover,  some works investigated the age-energy tradeoff (with or without energy harvesting) in various scenarios, e.g., \cite{age:bacinoglu,arafa2017age,feng2018minimizing,nath2018optimum}. Despite  many works on the scheduling design or on the age-energy tradeoff, their design and analysis were based on \textit{stochastic} models with some stationary assumptions but cannot be applied in  non-stationary settings. To fill this gap, in this paper we explore the age-energy tradeoff in a non-stationary setting.

\section{System overview} \label{section:system}
In this section, we start with describing our network model in Section~\ref{subsection:network}, followed by defining an age model in Section~\ref{subsection:age model}. Then,  we formulate a scheduling problem involving the age-energy tradeoff in Section~\ref{subsection:problem}.

\subsection{Network model} \label{subsection:network}
\begin{figure}
	\centering
	\includegraphics[width=.4\textwidth]{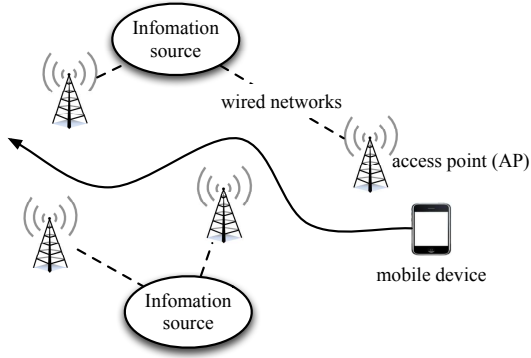}
	\caption{Network model.}
	\label{fig:model}
\end{figure}
Consider a mobile network in Fig.~\ref{fig:model}, where a mobile device is moving in an area with some access points (APs). The device is running an application that requires timely information. The information at the device can be updated by downloading the latest information through the APs.  

Divide  time into slots and index them by \mbox{$t=1, \cdots,T$}, where $T$ is the total  time for running the application. 	At the beginning of each slot, all the APs can immediately obtain the latest information from an information source, i.e., we neglect the transmission time between the APs and sources (through \textit{wired} networks as shown in Fig.~\ref{fig:model}) while focusing on the bottleneck between the APs and the device (through \textit{wireless} networks). Because of the device's mobility, the channel between the device and the APs changes over slots. Let  $s(t) \in \{0,1\}$ be the  state indicating the \textit{connectivity} in slot~$t$:  $s(t)=1$  if the device can successfully download the latest information through \textit{an} AP in slot~$t$; $s(t)=0$ if it cannot access any AP in slot~$t$. By $\mathbf{s}=\{s(1), \cdots, s(T)\}$ we define the \textit{connectivity pattern} of the mobile device, which is arbitrary with a potential non-stationary property. 

For each slot $t$ with $s(t)=1$, the device can decide whether to download the latest information. Let $d(t)\in \{0,1\}$ be the decision of the device in slot~$t$, where $d(t)=1$ if the device decides  to download, and $d(t)=0$ if it decides not to download. A \textit{scheduling algorithm} $\pi=\{d(1),  \cdots, d(T)\}$ specifies  decision  $d(t)$ for every slot $t$.  A scheduling algorithm is called an \textit{offline} scheduling algorithm if  connectivity pattern~$\mathbf{s}$ (along with  running time $T$) is  given as a prior. In contrast, a scheduling algorithm is called an \textit{online} scheduling algorithm if the connectivity pattern is unavailable; instead, it  knows the present connectivity state only.

\subsection{Age model} \label{subsection:age model}
Let $a(t)$ be the age of information at the device at the \textit{end} of slot $t$, i.e., after decision $d(t)$ is made. Suppose that, if the device successfully downloads the latest information, then the age of information  becomes zero; otherwise,  the age of information increases linearly with slots. Then, the dynamics of the age $a(t)$ of information is
\begin{align}
a(t+1)=\left\{
\begin{array}{ll}
0 & \text{if $s(t)=1$ and $d(t)=1$;} \\
a(t)+1 & \text{else,}
\end{array}
\right.
\label{eq:age-dynamic}
\end{align}
where the second case means that the age increases by one if the device cannot download the latest information, because either  the device  accesses no AP or the device decides not to download. Moreover, we assume that initially the device has the latest information, i.e., $a(0)=0$. 

\subsection{Problem formulation} \label{subsection:problem}
To investigate the tradeoff  between the age of information and the power consumption for downloading the latest information, we define an \textit{age cost} and a \textit{downloading cost} as follows. We consider  the cost associated with the age of information in slot~$t$ as $a(t)$, i.e., a linear age cost function. Suppose that the device uses a constant power to download the latest information, incurring a constant cost $c$ for each downloading. Regarding  non-linear age cost functions and dynamic power allocation, please see  Section~\ref{section:extnesion}.

Given connectivity pattern $\mathbf{s}$, we define the total cost $J(\mathbf{s},\pi)$ under scheduling algorithm $\pi$ by
\begin{align}
J(\mathbf{s}, \pi) =\sum_{t=1}^T \left(c\cdot d(t)+ a(t)\right),
\label{eq:cost}
\end{align}
where  the first term $c\cdot d(t)$ is the downloading cost in slot~$t$ and the second one $a(t)$ is the resulting age cost in slot~$t$. In this paper, we aim to develop  an online scheduling algorithm~$\pi$ such that  the total  cost $J(\mathbf{s}, \pi)$ can be minimized for all possible  connectivity patterns~$\mathbf{s}$. 

However, without knowing the connectivity pattern,  an online scheduling algorithm is unlikely to achieve the minimum total cost (obtained by an optimal offline scheduling algorithm).  
We characterize our online algorithm in terms of the \textit{competitiveness} against an optimal offline algorithm, defined as follows. 
\begin{define}
For connectivity pattern $\mathbf{s}$, let $OPT(\mathbf{s}) = \min_{\pi} J(\mathbf{s},\pi)$ be the minimum total cost for all possible offline scheduling algorithms. Then, an online scheduling algorithm~$\pi$ is called \textbf{$\boldsymbol{\gamma}$-competitive} if 
	\begin{eqnarray*}
		J(\mathbf{s}, \pi) \leq \gamma \cdot OPT(\mathbf{s}),
	\end{eqnarray*}
	for all possible connectivity patterns $\mathbf{s}$, where  $\gamma$ is called the \textbf{competitive ratio} of the online scheduling algorithm. 
\end{define}
With a $\gamma$-competitive online scheduling algorithm, the resulting total cost can be guaranteed to be at most~$\gamma$ times the minimum total cost, \textit{regardless of  connectivity pattern $\mathbf{s}$ and  running time $T$}. 

\section{Primal-dual formulation} \label{section:offline}
In this paper, we approach the scheduling problem  by leveraging \textit{primal-dual techniques} \cite{online-compuatation-naor} for linear programs. However, the total age $\sum^{T}_{t=1} a(t)$  in Eq.~(\ref{eq:cost}) would yield  \textit{quadratic terms} (see \cite{arafa2017age} for example), resulting in a non-linear total cost function. To resolve the non-linearity, in Section~\ref{subsection:virtual-queue} we   transform the age model into an equivalent \textit{virtual queueing system}.  Then, by exploiting the virtual queueing system, in Section~\ref{subsection:primal-dual}  we successfully formulate a linear program for solving the  offline scheduling problem, while proposing a primal-dual formulation. The  primal-dual formulation will be used later in Section~\ref{section:onine} for designing and analyzing our online scheduling algorithm.

\subsection{Virtual queuing system} \label{subsection:virtual-queue}
\begin{figure}
	\centering
	\includegraphics[width=.4\textwidth]{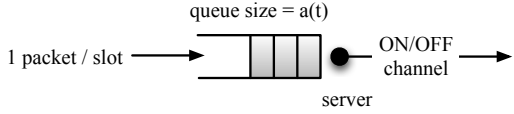}
	\caption{Virtual queueing system.}
	\label{fig:virtual}
\end{figure}
Fig.~\ref{fig:virtual} illustrates the virtual queueing system transformed from the age model. The virtual queueing system consists of a  server, a  queue, and some  packet arrivals,  operating in the same discrete-time system as the mobile network. At the beginning of each slot, a packet  arrives at the virtual queueing system, i.e., the $t$-th packet arrives in slot $t$. The server is associated with an  ON/OFF channel for each slot, where the channel is ON in slot $t$ if $s(t)=1$ and it is OFF if $s(t)=0$.  When the channel is ON in a slot, the server can decide whether to flush the queue in that slot:  the server  decides to \textit{flush} the queue in slot $t$ if $d(t)=1$ and it decides to \textit{idle}  if $d(t)=0$. If the channel is ON and the server decides to flush the queue, then the queue becomes empty, i.e., the queue size becomes zero; otherwise, the new arriving packet stays at the queue i.e., the queue size increases by one. The queueing dynamics is identical to the age dynamics in Eq.~(\ref{eq:age-dynamic}); therefore, the queue size at the end of slot $t$ (after decision $d(t)$ is made) can be described by $a(t)$.

Suppose that  holding a packet at the end of a slot incurs a \textit{holding cost} of one unit. As $a(t)$ packets are left in the queue in slot $t$, the cost  of holding all the packets in slot $t$ is $a(t)$. Moreover, suppose that each flushing takes a \textit{flushing cost} of~$c$ units.  The goal for the virtual queueing system is to identify a scheduling algorithm $\pi=\{d(1), \cdots, d(T)\}$ for minimizing  the total holding cost plus the total flushing cost over slots. Note that the total cost involved in the virtual queueing system is exactly $J(\mathbf{s}, \pi)$ in Eq.~(\ref{eq:cost}). In summary, the scheduling problem in the virtual queueing system is equivalent to the original scheduling problem.    

With the transformation, we can find that  the original online scheduling problem is related to the classical online TCP-ACK problem \cite{online-compuatation-naor}, in which a node decides whether to acknowledge received packets for balancing an acknowledgement cost and a delay cost. In fact, the online scheduling problem in the virtual queueing system generalizes the  online TCP-ACK problem  to  \textit{noisy} channels. In Section~\ref{section:extnesion}, we will further generalize to non-linear age cost functions and dynamic power allocation.

\subsection{Primal-dual formulation} \label{subsection:primal-dual}
According to Section~\ref{subsection:virtual-queue}, \textit{we will focus on the scheduling problem in the virtual queueing system}. To cast the problem  into a linear program, we introduce another variable $z_i(t)$ to indicate if the \mbox{$i$-th}  packet (arriving in slot $i$) is in the virtual queueing system in slot $t$, where $z_i(t)=1$ if it is and \mbox{$z_i(t)=0$} otherwise. Since a total of $t$ packets arrives at the virtual queueing system by slot $t$, the queue size $a(t)$ in slot~$t$  can be expressed as $a(t)=\sum_{i=1}^t z_{i}(t)$, i.e., counting for all the  packets arriving by slot~$t$. The total cost $J(\mathbf{s}, \mu)$ in Eq.~(\ref{eq:cost}) then becomes 
\begin{align*}
J(\mathbf{s}, \mu)=\sum_{t=1}^T \left(c \cdot d(t) + \sum_{i=1}^t z_{i}(t)\right),
\end{align*}
where the first term $c \cdot d(t)$ is the flushing cost in slot $t$ and the second one $\sum_{i=1}^t z_{i}(t)$ is the holding cost in slot $t$. 

We propose an integer program for solving the offline scheduling problem in the virtual queueing system.

\textbf{Integer program:}
\begin{subequations}	\label{integer-program}
	\begin{align}
	\min & \hspace{.2cm}\sum_{t=1}^T \left(c \cdot d(t) + \sum_{i=1}^t z_{i}(t)\right) \label{inter-program:objective}\\
	\text{s.t.} & \hspace{.2cm} z_i(t) + \sum_{\tau =i}^{t} s(\tau)d(\tau) \geq 1 \text{\,\,\,for  all $i \leq t$  with  all $t$;}	\label{interger-program:const1}\\
	& \hspace{.2cm}  \text{$d(t), z_i(t) \in \{0,1\}$ for all  $i$ and $t$.} \label{interger-program:const2} 
	\end{align}
\end{subequations}
In the integer program, the constraint in Eq.~(\ref{interger-program:const1}) means that the $i$-th packet  arriving by slot $t$ (i.e., $i \leq t$) either stays at the virtual queueing system  in slot $t$ (i.e., $z_i(t)=1$) or has been flushed by slot $t$ (i.e., $s(\tau)d(\tau)=1$ for some $i \leq \tau \leq t$).

By relaxing the integral constraint in Eq.~(\ref{interger-program:const2}) to  real numbers, we can obtain the following linear program.

\textbf{Linear program (primal program):}
\begin{subequations}	\label{primal-program}
	\begin{align}
	\min & \hspace{.2cm}\sum_{t=1}^T \left(c \cdot d(t) + \sum_{i=1}^t z_{i}(t)\right) \label{primal-program:objective}\\
	\text{s.t.} & \hspace{.2cm} z_i(t) + \sum_{\tau =i}^{t} s(\tau)d(\tau) \geq 1 \text{\,\,\,for  all $i \leq t$ with all $t$;} 	\label{primal-program:const1}\\
	&\hspace{.2cm}\text{$d(t), z_i(t) \geq 0$ for all $i$ and $t$.} \label{primal-program:const2} 
	\end{align}
\end{subequations}
The relaxation has no integrality gap (similar to the argument for the ski-rental problem \cite{online-compuatation-naor}); thus, a solution for $\{d(1), \cdots, d(T)\}$ in the linear program can minimize the total cost  if  connectivity pattern $\mathbf{s}$ is given in advance, i.e., the solution is an optimal offline scheduling algorithm. 

Now, we can see advantages of transforming into the virtual queueing system. The transformation along with the auxiliary variable $z_i(t)$, for all $i$ and $t$, can produce a linear objective function in Eq.~(\ref{primal-program:objective}) and a linear constraint in Eq.~(\ref{primal-program:const1}). Next, we refer to the linear program as a primal program and express its dual program as follows.

\textbf{Dual program:}
\begin{subequations}	\label{dual-program}
	\begin{align}
	\max& \hspace{.5cm}\sum_{t=1}^T \sum_{i=1}^t y_i(t) \label{dual-program:objective}\\
	\text{s.t.} & \hspace{.5cm} s(t)\sum_{i=1}^{t} \sum_{\tau=t}^{T} y_i(\tau) \leq c \text{\,\,\,for all $t$;} 	\label{dual-program:const1}\\
	& \hspace{.5cm}\text{$0 \leq y_i(t) \leq 1$ for all $i$ and $t$.} \label{dual-program:const2} 
	\end{align}
\end{subequations}
The primal-dual formulation will be employed in the next section for developing an online scheduling algorithm.

\section{Online scheduling algorithm design} \label{section:onine}
We will develop an online scheduling algorithm using the primal and dual programs formulated in the previous section. For that purpose,  we first propose a \textit{primal-dual algorithm} in Section~\ref{subsection:primal-dual-alg} for obtaining a \textit{feasible} solution to the primal and dual programs. Subsequently, we propose a \textit{randomized online  scheduling algorithm} in Section~\ref{subsection:online-alg}  by exploiting the solution to the primal program. The underlying idea is that the intermediate solution to the primal program  for each slot  can be viewed as the probability of flushing  in that slot. In contrast, the solution to the dual program  is  used for analyzing the primal objective value in Eq.~(\ref{primal-program:objective}) computed  by the primal-dual algorithm, but not for the  scheduling design.



\subsection{Primal-dual  algorithm} \label{subsection:primal-dual-alg}
We propose the primal-dual   algorithm in Alg.~\ref{primal-dual-alg} for obtaining a feasible solution to the primal and dual programs.  All the variables are initialized to be zeros in Line~\ref{primal-dual-alg:initial}. In each new slot~$t$, Alg.~\ref{primal-dual-alg} updates the values of all the variables according to the present channel state $s(t)$. If the channel is ON, then Alg.~\ref{primal-dual-alg} updates the variables in \mbox{Lines~\ref{primal-dual-alg:for1} -- \ref{primal-dual-alg:for1end}}; if it is OFF, then Alg.~\ref{primal-dual-alg} does  in Lines~\ref{primal-dual-alg:for2} -- \ref{primal-dual-alg:for2end}. The updates are performed by iteration from $i=1$ (i.e., the first packet) until $i=t$ (i.e., the $t$-th  packet).  At the end of the last slot $T$, Alg.~\ref{primal-dual-alg}  further updates dual variables $y_i(t)$ for all previous slots $t$ with $s(t)=0$ (see Lines~\ref{primal-dual-alg:cond-for-fix-y} - \ref{primal-dual-alg:fix-end}). 

First, consider slot $t$ with $s(t)=1$. The value of $\sum^t_{\tau=i}d(\tau)$ in Line~\ref{primal-dual-alg:condition} implies if the $i$-th  packet has been flushed by slot~$t$. If $\sum^t_{\tau=i}d(\tau) \geq 1$, then the $i$-th packet has been flushed; thus, no variable needs to be updated. On the contrary, if  $\sum^t_{\tau=i}d(\tau) <1$, then all the associated variables get updated in Lines~\ref{primal-dual-alg:z1} -- \ref{primal-dual-alg:y1}.  

More precisely, if the condition in Line~\ref{primal-dual-alg:condition} holds, then  variable $z_i(t)$ is updated to be the value of $1-\sum^t_{\tau=i}d(\tau)$ in Line~\ref{primal-dual-alg:z1} to satisfy the primal constraint in Eq.~(\ref{primal-program:const1}). Moreover, the intuition of updating $d(t)$ in Line~\ref{primal-dual-alg:d1} is that  the value of $d(t)$ can imply the probability of flushing the queue in slot $t$. The more packets stay at the queue, the higher the flushing probability is, i.e., the higher the value of $d(t)$ is. Thus, for those packets potentially staying at the queue in slot~$t$ (i.e., satisfying the condition in Line~\ref{primal-dual-alg:condition}), Alg.~\ref{primal-dual-alg} increases the value of  $d(t)$ according to Line~\ref{primal-dual-alg:d1}. The constant $\theta$ in Line~\ref{primal-dual-alg:d1} is specified as in Line~\ref{primal-dual-alg:constant} for satisfying the dual constraint in Eq.~(\ref{dual-program:const1}) (see Lemma~\ref{lemma:dual-feasible} for detail).  Regarding $y_i(t)$, it is updated to be one for maximizing the dual objective function in Eq.~(\ref{dual-program:objective}).

Second, consider  slot $t$ with $s(t)=0$. The value of $d(t)$ remains unchanged (i.e., $d(t)=0$) because the server has to idle in that slot. The value of $z_i(t)$, for \mbox{$i=1, \cdots, t-1$}, is the same as that in the previous slot (see Line~\ref{primal-dual-alg:z21})  since  the server can do nothing for those packets arriving prior to slot~$t$. However,  $z_t(t)$ is set to be one in Line~\ref{primal-dual-alg:z22} since the $t$-th packet arriving in slot $t$ has to stay at the queue.   Unlike the case of $s(t)=1$, Alg.~\ref{primal-dual-alg} does not handle $y_i(t)$  with $s(t)=0$, until the end of slot~$T$. If there exists a tight constraint (i.e., the equality holds)  associated with $y_i(t)$ in Eq.~(\ref{dual-program:const1}), then we say that $y_i(t)$ is tight.  Alg.~\ref{primal-dual-alg} updates those $y_i(t)$'s that are not tight yet in Line~\ref{primal-dual-alg:fix-y},  for achieving a higher dual objective value without violating  the constraint in Eq.~(\ref{dual-program:const1}).

\begin{remark}
Alg.~\ref{primal-dual-alg} \textit{learns} the values of  variable $d(t)$ in the online fashion. The solution for $d(t)$ in each slot $t$ will be used in Section~\ref{subsection:online-alg} for making an online scheduling decision in that slot. In contrast, Alg.~\ref{primal-dual-alg} also produces the values of  variables $z_i(t)$ and $y_i(t)$ for  analyzing the primal objective value in Eq.~(\ref{primal-program:objective}) computed by  Alg.~\ref{primal-dual-alg}. See Theorem~\ref{theroem:competitive-ratio1} later for detail. 
\end{remark}

\begin{algorithm}[t]
	\SetAlgoLined 
	\SetKwFunction{Union}{Union}\SetKwFunction{FindCompress}{FindCompress} \SetKwInOut{Input}{input}\SetKwInOut{Output}{output}
%
	$d(t)$, $z_i(t)$, $y_i(t)$  $\leftarrow 0$ for all  $i$ and  $t$\; \label{primal-dual-alg:initial}
	$\theta \leftarrow (1+\frac{1}{c})^{\lfloor c \rfloor}-1$\tcp*[r]{$\theta$ is a constant.} 	\label{primal-dual-alg:constant}

	\tcc{For each  new slot $t=1,  \cdots, T$, the variables are updated as follows:}
	
	\uIf{$s(t) =1$}{    
		\tcc{Consider those  $i$-th packets arrving by slot $t$.}
		\For{$i=1$ \KwTo $t$\label{primal-dual-alg:for1}}{	
				\If{$\sum_{\tau=i}^{t}d(\tau)<1$  \label{primal-dual-alg:condition}}{			
				$z_{i}(t) \leftarrow 1- \sum_{\tau=i}^{t}d(\tau)$\; \label{primal-dual-alg:z1}
				$d(t)\leftarrow d(t)+ \frac{1}{c}\sum_{\tau=i}^{t} d(\tau)+\frac{1}{\theta \cdot c}$\; 	 \label{primal-dual-alg:d1}			
				$y_i(t)\leftarrow 1$\; \label{primal-dual-alg:y1}
				}
			}\label{primal-dual-alg:for1end}

		}
		\Else(\tcp*[f]{$s(t)=0$}){ 
				\For{$i=1$ \KwTo $t$\label{primal-dual-alg:for2}}{	
				\uIf{$i < t$}{			
				$z_i(t) \leftarrow z_i(t-1)$\; \label{primal-dual-alg:z21}
									}
				\Else{
				$z_t(t) \leftarrow 1$\;	\label{primal-dual-alg:z22}
			}

			}
		\label{primal-dual-alg:for2end}
				}
			
	\tcc{At the end of slot $T$, variable $y_i(t)$ with $s(t)=0$ is  updated  as follows:} 
		\For{$t=1$ \KwTo $T$, with $s(t)=0$\label{primal-dual-alg:cond-for-fix-y}}{
			\For{$i=1$ \KwTo $t$}{
				\If{$y_i(t)$ is not tight\label{primal-dual-alg:fix-not-tight}}{
				$y_i(t) \leftarrow 1$\; \label{primal-dual-alg:fix-y}
			}
			}

		}\label{primal-dual-alg:fix-end}

	\caption{Primal-dual   algorithm.}
	\label{primal-dual-alg}
\end{algorithm}

\subsection{Analysis of Alg.~\ref{primal-dual-alg}}
In this section, we demonstrate the feasibility of the solution produced by Alg.~\ref{primal-dual-alg}, while analyzing the resulting primal objective value. Since the value of  variable $d(t)$ is updated over slots, we use   $\widetilde{d}(t)$ to represent the value of variable $d(t)$  at the \textit{end (i.e., after update) of slot~$t$}.  Similarly, let $\widetilde{z}_i(t)$  represent the corresponding value at the \textit{end of the $i$-th iteration of slot~$t$}. However, since $y_i(t)$ with $s(t)=0$ is modified in Line~\ref{primal-dual-alg:fix-y} at the end of slot~$T$, we use  $\widetilde{y}_i(t)$ to  represent the corresponding value at the \textit{end of the algorithm}. 

In addition, for each $i$-th iteration of slot $t$, we use $\widetilde{d}^{(i,t)}(\tau)$, for all $\tau=1, \cdots, T$,  to represent the value of $d(\tau)$ at the \textit{end of that iteration}, while using  $\widehat{d}^{(i,t)}(\tau)$ to represent the value of  $d(\tau)$ at the \textit{beginning}  of that iteration.   The two notation sets  are just employed for simplifying the following proofs; in fact, in the $i$-th iteration of slot $t$, only one variable $d(t)$ can be updated, but $d(\tau)$ for all $\tau \neq t$ keeps unchanged. 

We first establish the primal feasibility in the next lemma.
\begin{lemma} \label{lemma:primal-feasible}
Alg.~\ref{primal-dual-alg} produces a feasible solution to the primal program~(\ref{primal-program}).
\end{lemma}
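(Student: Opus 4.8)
The plan is to verify directly that the solution $(\widetilde d, \widetilde z)$ generated by Alg.~\ref{primal-dual-alg} satisfies every constraint of the primal program~(\ref{primal-program}). The nonnegativity constraint~(\ref{primal-program:const2}) is immediate from inspection of the algorithm: all variables start at zero (Line~\ref{primal-dual-alg:initial}), and every update (Lines~\ref{primal-dual-alg:z1}, \ref{primal-dual-alg:d1}, \ref{primal-dual-alg:y1}, \ref{primal-dual-alg:z21}, \ref{primal-dual-alg:z22}) only assigns nonnegative quantities or increments by a nonnegative amount; in particular, when Line~\ref{primal-dual-alg:z1} sets $z_i(t) \leftarrow 1 - \sum_{\tau=i}^t d(\tau)$, the guard in Line~\ref{primal-dual-alg:condition} guarantees $\sum_{\tau=i}^t d(\tau) < 1$, so $z_i(t) > 0$. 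Hence the whole burden falls on the covering constraint~(\ref{primal-program:const1}): for every $i \le t$ we must show $\widetilde z_i(t) + \sum_{\tau=i}^t s(\tau)\widetilde d(\tau) \ge 1$.

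First I would dispose of the case $s(t) = 0$. Here Line~\ref{primal-dual-alg:z22} forces $\widetilde z_t(t) = 1$, so the constraint for the pair $(t,t)$ holds with equality. For $i < t$, Line~\ref{primal-dual-alg:z21} copies $\widetilde z_i(t) = \widetilde z_i(t-1)$, and since $s(t) = 0$ the sum $\sum_{\tau=i}^{t} s(\tau)\widetilde d(\tau) = \sum_{\tau=i}^{t-1} s(\tau)\widetilde d(\tau)$ is unchanged from slot $t-1$; so the constraint for $(i,t)$ reduces to the one for $(i,t-1)$, which holds by induction on $t$. The base case $t=1$ with $s(1)=0$ is covered by the $(1,1)$ argument just given.

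Now the case $s(t) = 1$, which is the heart of the matter. Fix $i \le t$. If the guard in Line~\ref{primal-dual-alg:condition} fails for this $i$, i.e. $\sum_{\tau=i}^{t} \widehat d^{(i,t)}(\tau) \ge 1$ at the start of iteration $i$, then — using the key bookkeeping fact that $d(\tau)$ for $\tau \ne t$ never changes during slot $t$, and $d(t)$ only increases — we get $\sum_{\tau=i}^{t} \widetilde d(\tau) \ge 1$; combined with $s(\tau) = 1$ for $\tau = t$ and $s(\tau)\widetilde d(\tau) \le \widetilde d(\tau)$ termwise this is not quite enough, so the honest statement I would prove is: if the $i$-th iteration's guard fails, then there was an earlier flushing — more carefully, I would argue that failure of the guard at iteration $i$ of slot $t$ means some slot $\tau^\* \in [i,t]$ with $s(\tau^\*)=1$ already had $\widetilde d(\tau^\*)$ driven up to (or past) the level that certifies the constraint. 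The cleanest route is a monotonicity/telescoping claim: for each $i$, at the end of the algorithm $\widetilde z_i(t) + \sum_{\tau=i}^{t} s(\tau)\widetilde d(\tau) \ge \widetilde z_i(t') + \sum_{\tau=i}^{t'} s(\tau)\widetilde d(\tau)$ for the last $t' \le t$ at which iteration $i$ was actually processed, reducing everything to the iterations where Line~\ref{primal-dual-alg:z1} fires — and there Line~\ref{primal-dual-alg:z1} sets $\widetilde z_i(t) = 1 - \sum_{\tau=i}^{t}\widetilde d^{(i,t)}(\tau)$, while every $\tau$ in that sum with $s(\tau)=1$ contributes to $\sum_{\tau=i}^t s(\tau)\widetilde d(\tau)$, and the guard/initialization ensures any $\tau$ with $s(\tau)=0$ contributed $d(\tau)=0$ to the sum being subtracted; adding, the two telescope to exactly $1$.

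The main obstacle I anticipate is precisely this bookkeeping: carefully tracking, across both the within-slot iteration index $i$ and the slot index $t$, that the quantity $\sum_{\tau=i}^{t} d(\tau)$ appearing inside Line~\ref{primal-dual-alg:z1} matches the quantity $\sum_{\tau=i}^{t} s(\tau) d(\tau)$ appearing in the primal constraint — the discrepancy being exactly the $\tau$'s with $s(\tau)=0$, for which $d(\tau)=0$ always (the server idles), so the two sums coincide. Making that identity airtight, and then checking it is preserved by the end-of-algorithm $y$-updates (which touch no primal variable, so trivially), is the only delicate point; once it is in hand, the constraint~(\ref{primal-program:const1}) follows immediately from the assignment in Line~\ref{primal-dual-alg:z1}, and for $i$'s whose guard never fired one falls back on the $s(t)=0$ copying argument or on an earlier slot where a flush occurred.
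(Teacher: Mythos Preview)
Your approach is essentially the paper's: a case split on $s(t)$ and on whether the guard in Line~\ref{primal-dual-alg:condition} fires, together with the observation that $d(\tau)=0$ whenever $s(\tau)=0$. One remark: the detour you take in the ``guard fails'' subcase (looking for an earlier flushing, then proposing a telescoping reduction to the guard-fires case) is unnecessary once you have the bookkeeping fact you state in your last paragraph. With $\sum_{\tau=i}^t d(\tau)=\sum_{\tau=i}^t s(\tau)d(\tau)$ in hand, the guard-fails case is a one-liner: $\widetilde z_i(t)=0$ (the variable is never touched) and $\sum_{\tau=i}^t s(\tau)\widetilde d(\tau)=\sum_{\tau=i}^t \widetilde d(\tau)\ge \sum_{\tau=i}^t \widehat d^{(i,t)}(\tau)\ge 1$ by monotonicity. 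That is exactly how the paper handles it; conversely, the paper silently drops the $s(\tau)$ factor without comment, and you are right to flag that the identity $\sum d(\tau)=\sum s(\tau)d(\tau)$ is the point that makes this legitimate.
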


\begin{proof}
Please see Appendix~\ref{appendix:lemma:primal-feasible}.
\end{proof}

To examine the dual constraint in Eq.~(\ref{dual-program:const1}), we need the following technical Lemma~\ref{lemma:iterative-bound}.  In that lemma,  for each slot~$t$ we consider an \textit{ordered} set $Y(t)=\{ y_i(\tau): i\leq t, \tau \geq t, s(\tau)=1\}$  whose  order follows the processing steps in Alg.~\ref{primal-dual-alg}. Namely, the set can be expressed by $Y(t)=\{y_1(t_1), \cdots, y_t(t_1), y_1(t_2), \cdots, y_t(t_2), \cdots\}$, where $t \leq t_1 < t_2 <\cdots$ such that $s(t_1)=1$, $s(t_2)=1$, $\cdots$.   We want to emphasize that the set $Y(t)$  does not capture all the iterations since slot $t$, e.g, it excludes $y_{t_2}(t_2)$. 

\begin{lemma} \label{lemma:iterative-bound}
For a slot $t$,  assume that the $q$-th element in the ordered set $Y(t)$ is $y_j(\xi)$ for some $j$ and $\xi$.  If $y_j(\xi)$ is updated to be one  at the end of $j$-th iteration of slot $\xi$, then at the end of that iteration we have
\begin{align}
\sum_{\tau=i}^{t'} \widetilde{d}^{(j,\xi)}(\tau) \geq \frac{(1+\frac{1}{c})^q-1}{\theta},  \label{eq:q-ineqal}
\end{align}
 for all $i\leq t$ and $t' \geq \xi$.
\end{lemma}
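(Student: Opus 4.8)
The plan is to prove Eq.~(\ref{eq:q-ineqal}) by induction on the position index $q$ of the element $y_j(\xi)$ within the ordered set $Y(t)$. The base case $q=1$ should follow directly from the update rule in Line~\ref{primal-dual-alg:d1}: when $y_j(\xi)$ is the very first element of $Y(t)$ to be set to one, the condition in Line~\ref{primal-dual-alg:condition} must have held for $i=j$ in slot $\xi$, so $d(\xi)$ receives at least an additive $\frac{1}{\theta c}$; evaluating the right-hand side of Eq.~(\ref{eq:q-ineqal}) at $q=1$ gives $\frac{(1+1/c)-1}{\theta} = \frac{1}{\theta c}$, which matches. I would also need to observe that $\sum_{\tau=i}^{t'}\widetilde{d}^{(j,\xi)}(\tau) \geq \widetilde{d}(\xi)$ whenever $i\le t\le \xi \le t'$, since all the $d(\tau)$ are nonnegative and $\xi$ lies in the summation range; this reduction from an arbitrary interval $[i,t']$ to the single term at $\xi$ is what lets the bound be uniform in $i$ and $t'$.

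For the inductive step, suppose $y_j(\xi)$ is the $q$-th element and let $y_{j'}(\xi')$ be the $(q-1)$-th element, so by the induction hypothesis $\sum_{\tau=i}^{t'}\widetilde{d}^{(j',\xi')}(\tau) \geq \frac{(1+1/c)^{q-1}-1}{\theta}$ for all admissible $i,t'$. The key point is that at the moment we process the $j$-th iteration of slot $\xi$, the accumulated sum $\sum_{\tau=j}^{\xi}\widehat{d}^{(j,\xi)}(\tau)$ appearing at the start of that iteration is at least $\sum_{\tau=j'}^{\xi'} \widetilde{d}^{(j',\xi')}(\tau)$ — because the $d$-values only increase over the course of the algorithm, and the interval $[j',\xi']$ is contained in (or at least dominated by) the relevant history captured when we reach iteration $(j,\xi)$; one has to be a little careful here about how the two intervals relate, using that $j\le t \le \xi' \le \xi$ and that $y_{j'}(\xi')$ being the immediately preceding element of $Y(t)$ means no flush occurred in between that would reset things. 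Then Line~\ref{primal-dual-alg:d1} updates
\begin{align*}
\widetilde{d}(\xi) \;\geq\; \frac{1}{c}\sum_{\tau=j}^{\xi}\widehat{d}^{(j,\xi)}(\tau) + \frac{1}{\theta c} \;\geq\; \frac{1}{c}\cdot\frac{(1+1/c)^{q-1}-1}{\theta} + \frac{1}{\theta c} \;=\; \frac{(1+1/c)^{q}-1}{\theta},
\end{align*}
where the last equality is the routine algebraic identity $\frac{1}{c}\big((1+1/c)^{q-1}-1\big)+\frac{1}{c} = \frac{1}{c}(1+1/c)^{q-1} = (1+1/c)^q - (1+1/c)^{q-1}$ rearranged. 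Combining with $\sum_{\tau=i}^{t'}\widetilde{d}^{(j,\xi)}(\tau)\ge \widetilde{d}(\xi)$ closes the induction.

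I expect the main obstacle to be the bookkeeping in the inductive step: precisely justifying that when the algorithm reaches the $j$-th iteration of slot $\xi$, the partial sum $\sum_{\tau=j}^{\xi}\widehat{d}^{(j,\xi)}(\tau)$ already dominates the bound established for the previous element $y_{j'}(\xi')$ of $Y(t)$. This requires carefully tracking the order in which iterations are processed (the $(i,t)$ lexicographic-style ordering), using monotonicity of the $d$-variables, and using the fact that consecutive elements of $Y(t)$ are defined so that the condition in Line~\ref{primal-dual-alg:condition} holds throughout — i.e., no flush has "completed" the packet in question in between, so the relevant $d$-mass is not discarded. The rest is elementary: nonnegativity to pass from an interval sum to a single term, and the closed-form manipulation of $(1+1/c)^q$. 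I would isolate the monotonicity/ordering claim as a short sub-argument before running the induction.
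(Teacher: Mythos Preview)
Your induction scheme matches the paper's, and the base case and the monotonicity observations are fine. The gap is in the inductive step: you reduce $\sum_{\tau=i}^{t'}\widetilde{d}^{(j,\xi)}(\tau)$ all the way down to the single term $\widetilde{d}^{(j,\xi)}(\xi)$ and try to show that \emph{this one value} already satisfies the bound $\frac{(1+1/c)^q-1}{\theta}$. That is too aggressive and is not true in general. With only the increment from Line~\ref{primal-dual-alg:d1} you get
\[
\widetilde{d}^{(j,\xi)}(\xi)\;\ge\;\frac{1}{c}\sum_{\tau=j}^{\xi}\widehat{d}^{(j,\xi)}(\tau)+\frac{1}{\theta c}\;\ge\;\frac{1}{c}\cdot\frac{(1+1/c)^{q-1}-1}{\theta}+\frac{1}{\theta c}\;=\;\frac{(1+1/c)^{q-1}}{\theta c},
\]
which is strictly smaller than $\frac{(1+1/c)^q-1}{\theta}$ for every $q\ge 2$ (check $q=2$). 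The displayed equality you wrote, asserting that this quantity equals $\frac{(1+1/c)^q-1}{\theta}$, is simply false; the algebraic slip is hiding the conceptual loss. Keeping the extra term $\widehat{d}^{(j,\xi)}(\xi)$ does not help either, since it can be zero whenever $\xi\neq\xi'$.

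The fix --- and this is what the paper does --- is to carry the whole partial sum through the induction rather than a single coordinate. Bound $\sum_{\tau=i}^{t'}\widetilde{d}^{(j,\xi)}(\tau)$ below by $\sum_{\tau=t}^{\xi}\widetilde{d}^{(j,\xi)}(\tau)$, then split this as $\sum_{\tau=t}^{\xi}\widetilde{d}^{(j',\xi')}(\tau)+\bigl(\widetilde{d}^{(j,\xi)}(\xi)-\widetilde{d}^{(j',\xi')}(\xi)\bigr)$, and use Eq.~(\ref{eq:d-inequal})-type reasoning to bound the increment $\widetilde{d}^{(j,\xi)}(\xi)-\widetilde{d}^{(j',\xi')}(\xi)$ by $\frac{1}{c}\sum_{\tau=t}^{\xi}\widetilde{d}^{(j',\xi')}(\tau)+\frac{1}{\theta c}$. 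This yields the crucial multiplier $(1+\tfrac{1}{c})$ on the inductive quantity, and the identity $(1+\tfrac{1}{c})\bigl[(1+\tfrac{1}{c})^{q-1}-1\bigr]+\tfrac{1}{c}=(1+\tfrac{1}{c})^q-1$ then closes the induction.
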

\begin{proof}
Please see Appendix~\ref{appendix:lemma:iterative-bound}.
 \end{proof}

We next confirm the dual feasibility using the above lemma.
\begin{lemma}\label{lemma:dual-feasible}
Alg.~\ref{primal-dual-alg} produces a feasible  solution to the dual program~(\ref{dual-program}) if the constant $\theta$ is specified as $\theta =(1+\frac{1}{c})^{\lfloor c \rfloor}-1$.
\end{lemma}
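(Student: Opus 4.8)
The plan is to verify the two families of dual constraints in~(\ref{dual-program}) separately, and the work concentrates entirely on~(\ref{dual-program:const1}) since~(\ref{dual-program:const2}) is immediate: Alg.~\ref{primal-dual-alg} only ever assigns $\widetilde{y}_i(t)\in\{0,1\}$, so $0\le\widetilde y_i(t)\le 1$ holds trivially. For the main constraint, fix a slot~$t$; if $s(t)=0$ the left-hand side $s(t)\sum_{i=1}^{t}\sum_{\tau=t}^{T}\widetilde y_i(\tau)$ vanishes and the constraint is vacuous, so assume $s(t)=1$. Then I must bound $\sum_{i=1}^{t}\sum_{\tau=t,\,s(\tau)=1}^{T}\widetilde y_i(\tau)$ (the terms with $s(\tau)=0$ for $\tau>t$ contribute $\widetilde y_i(\tau)$ that were possibly raised to~$1$ in Line~\ref{primal-dual-alg:fix-y}, but I will argue those are already ``charged'' against tight earlier constraints — see below). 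The quantity $\sum_{i=1}^t\sum_{\tau\ge t,\,s(\tau)=1}\widetilde y_i(\tau)$ is exactly the number of elements of the ordered set $Y(t)$ that were set to one by Alg.~\ref{primal-dual-alg}; call this count $Q$. The goal is to show $Q\le c$, i.e.\ $Q\le\lfloor c\rfloor$ since $Q$ is an integer.

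The heart of the argument is Lemma~\ref{lemma:iterative-bound}. Suppose, for contradiction, that at least $\lfloor c\rfloor+1$ elements of $Y(t)$ get set to one. Consider the $(\lfloor c\rfloor+1)$-th such element, say $y_j(\xi)$, updated to one at the end of the $j$-th iteration of slot~$\xi$; this is (at latest) the $q$-th element of $Y(t)$ for some $q\ge\lfloor c\rfloor+1$. Apply Lemma~\ref{lemma:iterative-bound} with this $q$, taking $i=j$ and $t'=\xi$ (note $j\le t$ and $\xi\ge t$, so the hypotheses $i\le t$, $t'\ge\xi$ are met with room to spare): at the end of that iteration,
\begin{align*}
\sum_{\tau=j}^{\xi}\widetilde d^{(j,\xi)}(\tau)\ \ge\ \frac{(1+\tfrac1c)^{q}-1}{\theta}\ \ge\ \frac{(1+\tfrac1c)^{\lfloor c\rfloor+1}-1}{\theta}\ >\ \frac{(1+\tfrac1c)^{\lfloor c\rfloor}-1}{\theta}\ =\ 1,
\end{align*}
where the last equality is precisely the choice $\theta=(1+\tfrac1c)^{\lfloor c\rfloor}-1$. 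But $\sum_{\tau=j}^{\xi} d(\tau)\ge\sum_{\tau=j}^{\xi}\widetilde d^{(j,\xi)}(\tau)>1$ contradicts the fact that $y_j(\xi)$ was set to one inside the branch guarded by the test $\sum_{\tau=j}^{\xi}d(\tau)<1$ in Line~\ref{primal-dual-alg:condition} — the condition under which Line~\ref{primal-dual-alg:y1} fires. Hence at most $\lfloor c\rfloor$ elements of $Y(t)$ are ever set to one, so the ``ON-slot'' part of the dual sum is at most $\lfloor c\rfloor\le c$.

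The remaining subtlety is the end-of-algorithm fix-up in Lines~\ref{primal-dual-alg:cond-for-fix-y}--\ref{primal-dual-alg:fix-end}: a dual variable $y_i(\tau)$ with $s(\tau)=0$, $\tau>t$, may be raised to one, and such terms are not in $Y(t)$ yet still appear in the inner sum $\sum_{\tau=t}^{T}\widetilde y_i(\tau)$ of constraint~(\ref{dual-program:const1}) for slot~$t$ whenever, say, the chain of constraints lets them. I expect this to be the main obstacle, and I would handle it by observing that Line~\ref{primal-dual-alg:fix-y} only raises $y_i(\tau)$ when it is \emph{not} tight, i.e.\ raising it keeps every constraint of the form~(\ref{dual-program:const1}) satisfied — this is exactly what ``not tight'' means, and the algorithm checks it explicitly in Line~\ref{primal-dual-alg:fix-not-tight} before acting. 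So after the fix-up no constraint~(\ref{dual-program:const1}) is violated: for slot~$t$ with $s(t)=1$, either the constraint was already tight before the fix-up (in which case the corresponding $y$'s were left alone) or it had slack, and each fix-up step is performed only while slack remains. Combining: constraint~(\ref{dual-program:const1}) holds for slots with $s(t)=0$ trivially, and for slots with $s(t)=1$ it holds because the counting bound $Q\le\lfloor c\rfloor\le c$ controls the contribution before fix-up and the fix-up is by construction constraint-preserving. Together with the trivial box constraint~(\ref{dual-program:const2}), this shows the dual solution produced by Alg.~\ref{primal-dual-alg} is feasible. $\qed$
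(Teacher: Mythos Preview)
Your overall strategy matches the paper's: verify~(\ref{dual-program:const2}) trivially, reduce~(\ref{dual-program:const1}) for ON slots to bounding the number of elements of $Y(t)$ set to one, invoke Lemma~\ref{lemma:iterative-bound}, and handle the end-of-run fix-up by appealing to the ``not tight'' guard in Line~\ref{primal-dual-alg:fix-not-tight}. That last part is treated the same way in the paper.

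There is, however, a genuine gap in your contradiction step. You apply Lemma~\ref{lemma:iterative-bound} to the $(\lfloor c\rfloor{+}1)$-th element $y_j(\xi)$ and obtain $\sum_{\tau=j}^{\xi}\widetilde d^{(j,\xi)}(\tau)>1$ \emph{at the end} of iteration $(j,\xi)$, then claim this contradicts the guard $\sum_{\tau=j}^{\xi}d(\tau)<1$ in Line~\ref{primal-dual-alg:condition}. But that guard is evaluated at the \emph{beginning} of the iteration, using $\widehat d^{(j,\xi)}(\tau)$, and Line~\ref{primal-dual-alg:d1} increments $d(\xi)$ in between; so $\sum_{\tau=j}^{\xi}\widetilde d^{(j,\xi)}(\tau)>1$ is perfectly compatible with $\sum_{\tau=j}^{\xi}\widehat d^{(j,\xi)}(\tau)<1$. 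Your chain ``$\sum_{\tau=j}^{\xi} d(\tau)\ge\sum_{\tau=j}^{\xi}\widetilde d^{(j,\xi)}(\tau)>1$'' points the wrong direction for the quantity actually used in the test.

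The paper sidesteps this timing issue by applying Lemma~\ref{lemma:iterative-bound} one step earlier --- to the $\lfloor c\rfloor$-th element $y_j(\xi)$ of $Y(t)$, with $q=\lfloor c\rfloor$ --- obtaining $\sum_{\tau=i}^{t'}\widetilde d^{(j,\xi)}(\tau)\ge 1$ for all $i\le t$, $t'\ge\xi$. Since the $d(\cdot)$ values never decrease, every subsequent element $y_{j'}(\xi')\in Y(t)$ satisfies $\sum_{\tau=j'}^{\xi'}\widehat d^{(j',\xi')}(\tau)\ge\sum_{\tau=j'}^{\xi'}\widetilde d^{(j,\xi)}(\tau)\ge 1$ at the \emph{beginning} of its iteration, so its Line~\ref{primal-dual-alg:condition} test fails and it is never set to one. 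This gives $Q\le\lfloor c\rfloor$ directly. Your argument is easily repaired this way: apply Lemma~\ref{lemma:iterative-bound} to an element \emph{preceding} the one whose Line~\ref{primal-dual-alg:condition} test you want to fail, then use monotonicity of $d$ to transport the bound forward to the beginning of the later iteration.
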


\begin{proof}
According to Lines~\ref{primal-dual-alg:initial}, \ref{primal-dual-alg:y1} and \ref{primal-dual-alg:fix-y}, we obtain $0 \leq \widetilde{y}_i(t) \leq 1$, for all $i$ and $t$, satisfying the dual constraint in Eq.~(\ref{dual-program:const2}).

Next, we establish the feasibility of the constraint in  Eq.~(\ref{dual-program:const1}).
According to the condition in Line~\ref{primal-dual-alg:fix-not-tight}, it suffices to show that, before the modification at the end of slot $T$ (i.e., before Line~\ref{primal-dual-alg:cond-for-fix-y}),  the value of $\sum_{i=1}^{t} \sum_{\tau=t}^{T} y_i(\tau)$ in Eq.~(\ref{dual-program:const1}) is less than or equal to $c$, for all slot~$t$ with $s(t)=1$.  Note that $y_i(\tau)$, for $i \leq t$ and $\tau \geq t$ with $s(\tau)=1$, is updated to be one in Line~\ref{primal-dual-alg:y1} if  the condition in Line~\ref{primal-dual-alg:condition} holds. Thus, it also suffices to show that at most $\lfloor c \rfloor$ elements in the ordered set $Y(t)$ can be updated to be one. 

Suppose that the $\lfloor c \rfloor$-th element in $Y(t)$ is $y_j(\xi)$ for some $j$ and $\xi$. According to Eq.~(\ref{eq:q-ineqal})  in Lemma~\ref{lemma:iterative-bound}, if the $\lfloor c \rfloor$-th element in $Y(t)$ is updated to be one, then we have
%
%
\begin{align*}
\sum_{\tau=i}^{t'} \widetilde{d}^{(j,\xi)}(\tau) \geq \frac{(1+\frac{1}{c})^{\lfloor c \rfloor}-1}{\theta}=1, 
\end{align*}
for all $i \leq t$ and $t' \geq \xi$, where the last equality is based on  $\theta=(1+\frac{1}{c})^{\lfloor c \rfloor}-1$ as stated in the lemma. Thus,  all the elements in $Y(t)$ after $y_{j}(\xi)$ are no longer updated since their conditions in Line~\ref{primal-dual-alg:condition} fail.
\end{proof}

\begin{remark}
According to the proof of Lemma~\ref{lemma:dual-feasible}, the computational complexity of Alg.~\ref{primal-dual-alg} can be reduced by modifying the iteration in Line~\ref{primal-dual-alg:for1}: consider iteration $i=t'$ until $i=t$ where $t'$ is the $\lfloor c \rfloor$-th ON slot before slot $t$, i.e., we remove all the iterations before $t'$. That is because the iterations before~$t'$ cannot satisfy the  condition in Line~\ref{primal-dual-alg:condition}. 
\end{remark}

After establishing the feasibility of the solution returned by Alg.~\ref{primal-dual-alg},   the next theorem analyzes  the primal objective value in Eq.~(\ref{primal-program:objective}) computed by Alg.~\ref{primal-dual-alg}.
\begin{theorem} \label{theroem:competitive-ratio1}
The primal objective value computed  by Alg.~\ref{primal-dual-alg} is bounded above by $\frac{e}{(e-1)}OPT(\mathbf{s})$ for all  possible connectivity patterns~$\mathbf{s}$, when $c$ tends to infinity. 
\end{theorem}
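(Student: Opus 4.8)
The plan is to use the standard primal-dual accounting framework for online algorithms: show that every increment to the primal objective made by Alg.~\ref{primal-dual-alg} can be charged against an increment to the dual objective, with a bounded ratio. Since Lemma~\ref{lemma:primal-feasible} and Lemma~\ref{lemma:dual-feasible} already guarantee that Alg.~\ref{primal-dual-alg} produces feasible primal and dual solutions, weak duality gives $\sum_{t}\sum_{i\le t}\widetilde{y}_i(t) \le OPT(\mathbf{s})$, so it suffices to bound the computed primal objective value by $\frac{e}{e-1}$ times the computed dual objective value (asymptotically in $c$).

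First I would analyze what happens in an ON slot $t$ (the only slots where the primal objective grows via the $d$-updates and via the freshly set $z_i(t)$). Consider the $i$-th iteration for which the condition in Line~\ref{primal-dual-alg:condition} holds. In that iteration: (i) the dual objective rises by exactly $1$ because $y_i(t)$ is set to one in Line~\ref{primal-dual-alg:y1}; (ii) the primal objective rises by $c\cdot\Delta d(t) + z_i(t)$, where $\Delta d(t) = \frac{1}{c}\sum_{\tau=i}^{t}\widehat{d}^{(i,t)}(\tau) + \frac{1}{\theta c}$ is the increment in Line~\ref{primal-dual-alg:d1} and $z_i(t) = 1-\sum_{\tau=i}^{t}\widehat{d}^{(i,t)}(\tau)$ from Line~\ref{primal-dual-alg:z1}. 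Multiplying the $d$-increment by $c$ gives $\sum_{\tau=i}^{t}\widehat{d}^{(i,t)}(\tau) + \frac{1}{\theta}$, so the total primal increase in this iteration is $\sum_{\tau=i}^{t}\widehat{d}^{(i,t)}(\tau) + \frac{1}{\theta} + 1 - \sum_{\tau=i}^{t}\widehat{d}^{(i,t)}(\tau) = 1 + \frac{1}{\theta}$. Thus each unit of dual-objective growth is matched by exactly $1+\frac{1}{\theta}$ units of primal-objective growth. With $\theta = (1+\frac{1}{c})^{\lfloor c\rfloor}-1$ this ratio is $1 + \frac{1}{(1+1/c)^{\lfloor c\rfloor}-1} = \frac{(1+1/c)^{\lfloor c\rfloor}}{(1+1/c)^{\lfloor c\rfloor}-1}$, which tends to $\frac{e}{e-1}$ as $c\to\infty$ since $(1+1/c)^{\lfloor c\rfloor}\to e$.

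Next I would account for the primal contributions that are \emph{not} tied to an ON-slot iteration satisfying Line~\ref{primal-dual-alg:condition}: namely the holding costs $z_i(t)$ set in OFF slots (Lines~\ref{primal-dual-alg:z21}--\ref{primal-dual-alg:z22}) and, symmetrically, the dual gains harvested at the end of slot $T$ in Lines~\ref{primal-dual-alg:cond-for-fix-y}--\ref{primal-dual-alg:fix-end}. The cleanest route is to observe that for a packet $i$ that is never flushed, each OFF slot $t\ge i$ with $z_i(t)=1$ has a not-yet-tight $y_i(t)$ that gets raised to $1$ in Line~\ref{primal-dual-alg:fix-y}, and when $z_i(t)$ carries over from an ON slot where the condition failed, the packet was already flushed (so $z_i(t)=0$ there by Line~\ref{primal-dual-alg:z1}, contradiction) — one needs to check the bookkeeping carefully so that every unit of holding cost in an OFF slot is paired with a dual increment of $1$ at that same $(i,t)$, and that no dual variable is counted twice. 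Handling the boundary packets near $t=T$ (where some $z_i(t)$ may be charged but the corresponding $y_i(t)$ constraint is already tight, contributing nothing) is where the ``asymptotically'' qualifier and a possible additive $O(1)$ or lower-order slack enters; I would argue this slack is dominated as $c\to\infty$ or absorbed into $OPT(\mathbf{s})$.

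The main obstacle I anticipate is precisely this second part — making the charging argument airtight for the OFF-slot holding costs and the end-of-algorithm dual fix-up, ensuring a clean one-to-one (or bounded) correspondence between primal and dual increments across \emph{all} slots, not just the ON slots where the algebra is transparent. The ON-slot computation is a short exact identity; the subtlety is that $z_i(t)$ can be updated in three different places with three different rationales, and the dual variable $y_i(t)$ for OFF slots is deferred, so one must track which dual increments remain ``available'' to pay for which primal increments without double-counting. Once that correspondence is established, combining it with the ON-slot ratio $1+\frac{1}{\theta}$ and weak duality ($\text{dual value}\le OPT(\mathbf{s})$) yields primal value $\le (1+\frac1\theta)\cdot OPT(\mathbf{s}) \to \frac{e}{e-1}OPT(\mathbf{s})$, as claimed.
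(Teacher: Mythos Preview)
Your ON-slot analysis is correct and matches the paper exactly: in each iteration where Line~\ref{primal-dual-alg:condition} fires, the primal increases by exactly $1+\frac{1}{\theta}$ while the dual increases by $1$, and this ratio tends to $\frac{e}{e-1}$.

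The gap is in your OFF-slot accounting. You correctly identify the obstacle --- OFF-slot holding costs $\widetilde{z}_i(t)$ must be charged to dual increments, but the deferred raise in Line~\ref{primal-dual-alg:fix-y} only fires when $y_i(t)$ is not already tight --- yet your proposed resolution (``boundary packets near $t=T$'', ``additive $O(1)$ slack'') mischaracterizes the issue. Tightness is not a boundary phenomenon near $T$; it can occur at any OFF slot, and the uncharged primal cost is not $O(1)$ but potentially $O(T^2)$.

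The paper's resolution is a clean two-case split. \emph{Case 1}: if every OFF-slot $y_i(t)$ is successfully raised to $1$ in Line~\ref{primal-dual-alg:fix-y}, then in each OFF-slot iteration $\Delta P_i(t)=\widetilde{z}_i(t)\le 1 = \Delta D_i(t)$, so the ratio $1+\frac{1}{\theta}$ holds across all iterations and $P\le(1+\frac{1}{\theta})D\le(1+\frac{1}{\theta})OPT(\mathbf{s})$ by weak duality. \emph{Case 2}: if some OFF-slot $y_i(t)$ remains $0$, then by Line~\ref{primal-dual-alg:fix-not-tight} an associated dual constraint in Eq.~(\ref{dual-program:const1}) is tight, which forces $D\ge c$ and hence $OPT(\mathbf{s})\ge c$. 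Now bound the entire OFF-slot primal contribution crudely by $\sum_{t}\sum_{i\le t}\widetilde{z}_i(t)\le T(T+1)/2 \le \frac{T(T+1)}{2c}\,OPT(\mathbf{s})$, which vanishes relative to $OPT(\mathbf{s})$ as $c\to\infty$. So the ``absorption into $OPT(\mathbf{s})$'' you allude to does work, but the mechanism is that tightness forces $OPT(\mathbf{s})\ge c$, not that the uncharged slack is small in absolute terms.
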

\begin{proof}	
Here we sketch the idea. Please see Appendix~\ref{appendix:theroem:competitive-ratio1} for detail.  
	
	Let $\Delta P_i(t)$ and $\Delta D_i(t)$ be the increment of  the primal objective value and that of the dual objective value, respectively, in the $i$-th iteration of slot $t$. We derive $\Delta P_i(t)$ and $\Delta D_i(t)$ for the case when  $s(t)=1$ and \mbox{$\sum_{\tau=i}^{t}\widehat{d}^{(i,t)}(\tau)<1$} (see Appendix~\ref{appendix:theroem:competitive-ratio1} for other cases). According to Lines~\ref{primal-dual-alg:z1} and \ref{primal-dual-alg:d1},  $\Delta P_i(t)$ for this case can be expressed as
	\begin{align*}
	&c \left(\widetilde{d}^{(i,t)}(t)-\widehat{d}^{(i,t)}(t)\right)+\widetilde{z}_i(t)\\
	=&c\left(\frac{1}{c} \sum_{\tau=i}^{t} \widehat{d}^{(i,t)}(\tau)+\frac{1}{\theta\cdot c}\right) +\left(1- \sum_{\tau=i}^{t}\widehat{d}^{(i,t)}(\tau)\right)=1+\frac{1}{\theta}.
	\end{align*} 
	Moreover, $\Delta D_i(t)=\widetilde{y}_i(t)=1$ according to Line~\ref{primal-dual-alg:y1}. Therefore, we obtain $\Delta P_i(t) \leq (1+\frac{1}{\theta})\Delta D_i(t)$ for this case.  

	By $P$ and $D$  we denote the primal and dual objective values computed by Alg.~\ref{primal-dual-alg}. Then,  $P=\sum_{t=1}^{T} \sum_{i=1}^t \Delta P_i(t)$ and $D=\sum_{t=1}^{T} \sum_{i=1}^t \Delta D_i(t)$. Appendix~\ref{appendix:theroem:competitive-ratio1} can  further conclude that   
	\begin{align*}
	P \leq \left(1+\frac{1}{\theta}\right) D +O\left(\frac{1}{c}\right) \leq \left(1+\frac{1}{\theta}\right) OPT(\mathbf{s})+O\left(\frac{1}{c}\right),
	\end{align*}
	where the last inequality is due to the weak duality \cite{online-compuatation-naor}. Finally, the result follows by letting $c$ tend to infinity.
\end{proof}

\subsection{Randomized online scheduling algorithm} \label{subsection:online-alg}
In this section, we propose a randomized online scheduling algorithm in Alg.~\ref{online-alg}. The algorithm updates variable $d(t)$ in Line~\ref{online-alg:d1} in the same way as Alg.~\ref{primal-dual-alg} does. Moreover, Alg.~\ref{online-alg} uses additional variables:  the value of $d_{\text{pre-sum}}$ in slot $t$ is  the  cumulative value of $\min(d(t),1)$ until $t-1$ (see Line~\ref{online-alg:pre-sum}); the value of $d_{\text{sum}}$ in slot $t$ is  the cumulative value of $\min(d(t),1)$ until slot $t$ (see Line~\ref{online-alg:sum}). Let $\widetilde{d}_{\text{pre-sum}}(t)$ and $\widetilde{d}_{\text{sum}}(t)$ be the corresponding values at the \textit{end of slot $t$}.

Alg.~\ref{online-alg} picks a uniform random number $u \in [0,1)$ in Line~\ref{online-alg:random}. According to Lines~\ref{online-alg:condition} - \ref{online-alg:u+1}, if $s(t)=1$ and there exists a $k \in \mathbb{N}$ such that $u+k \in [\widetilde{d}_{\text{pre-sum}}(t), \widetilde{d}_{\text{sum}}(t))$, then the server decides to flush the queue in  slot $t$, i.e., the device decides to download the latest information in that slot. The intuition of Alg.~\ref{online-alg} is that, with the uniform random choice of $u$, the probability of flushing in slot $t$ can be derived as $\widetilde{d}_{\text{sum}}(t)-\widetilde{d}_{\text{pre-sum}}(t)=\min(\widetilde{d}(t),1)$ in Appendix~\ref{appendix:theorem:online-alg}.

The next theorem presents the competitive ratio of Alg.~\ref{online-alg}. 

\begin{theorem} \label{theorem:online-alg}
	The  expected competitive ratio of Alg.~\ref{online-alg} approaches 
  $e/(e-1)$ as $c$ tends to infinity. 
\end{theorem}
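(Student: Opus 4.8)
The plan is to reduce the theorem to Theorem~\ref{theroem:competitive-ratio1} by showing that the \emph{expected} total cost incurred by Alg.~\ref{online-alg} is at most the primal objective value
$P = c\sum_{t=1}^{T}\widetilde{d}(t) + \sum_{t=1}^{T}\sum_{i=1}^{t}\widetilde{z}_i(t)$
returned by Alg.~\ref{primal-dual-alg}. Writing $\pi$ for the (random) schedule produced by Alg.~\ref{online-alg}, I would first observe that the update of $d(t)$ in Line~\ref{online-alg:d1} of Alg.~\ref{online-alg} mirrors Line~\ref{primal-dual-alg:d1} of Alg.~\ref{primal-dual-alg}, so the quantities $\widetilde{d}(t)$, $\widetilde{d}_{\text{pre-sum}}(t)$, $\widetilde{d}_{\text{sum}}(t)$ generated while running Alg.~\ref{online-alg} coincide with those of Alg.~\ref{primal-dual-alg}; in particular $\pi$ flushes only in ON slots and is a feasible integral schedule for every realization of $u$. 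It then suffices to bound the expected flushing (downloading) cost and the expected holding (age) cost of $\pi$ separately by $c\sum_t\widetilde{d}(t)$ and $\sum_t\sum_i\widetilde{z}_i(t)$.

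For the flushing cost, the event ``$\pi$ flushes in an ON slot $t$'' is, by Lines~\ref{online-alg:condition}--\ref{online-alg:u+1}, exactly the event that the random point set $\{u,u+1,u+2,\dots\}$ meets the half-open interval $\bigl[\widetilde{d}_{\text{pre-sum}}(t),\widetilde{d}_{\text{sum}}(t)\bigr)$, whose length is $\min(\widetilde{d}(t),1)\le 1$; since $u$ is uniform on $[0,1)$ this probability equals $\min(\widetilde{d}(t),1)$, so $\mathbb{E}[\text{flushing cost}]=c\sum_{t:\,s(t)=1}\min(\widetilde{d}(t),1)\le c\sum_t\widetilde{d}(t)$. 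For the holding cost, the $i$-th virtual packet is still in the queue at the end of slot $t$ precisely when $\pi$ flushes in no ON slot of $\{i,\dots,t\}$. Here the key structural fact is that $\widetilde{d}_{\text{pre-sum}}$ and $\widetilde{d}_{\text{sum}}$ are running partial sums of $\min(\widetilde{d}(\cdot),1)$, so $\widetilde{d}_{\text{sum}}(\tau)=\widetilde{d}_{\text{pre-sum}}(\tau+1)$ while the interval attached to an OFF slot (or an ON slot with $\widetilde{d}(\tau)=0$) is empty; hence the intervals $\bigl[\widetilde{d}_{\text{pre-sum}}(\tau),\widetilde{d}_{\text{sum}}(\tau)\bigr)$ for $\tau=i,\dots,t$ tile the single interval $\bigl[\widetilde{d}_{\text{pre-sum}}(i),\widetilde{d}_{\text{sum}}(t)\bigr)$ of length $\ell_i(t):=\sum_{\tau=i}^{t}\min(\widetilde{d}(\tau),1)$. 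Because $\{u,u+1,u+2,\dots\}$ is a uniformly shifted integer lattice, it misses an interval of length $\ell$ with probability $\max(0,1-\ell)$, so $\Pr[\text{$i$-th packet in queue at end of slot $t$}]=\max\bigl(0,1-\ell_i(t)\bigr)$.

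It then remains to check $\max(0,1-\ell_i(t))\le\widetilde{z}_i(t)$ for every $i\le t$; summing over $i,t$ gives $\mathbb{E}[\text{holding cost}]\le\sum_t\sum_i\widetilde{z}_i(t)$ and hence $\mathbb{E}[J(\mathbf{s},\pi)]\le P$. I would split on the condition in Line~\ref{primal-dual-alg:condition} at the $i$-th iteration of slot $t$ (with the OFF-slot cases reducing to slot $t-1$): if it holds, then $\widetilde{z}_i(t)=1-\sum_{\tau=i}^{t}\widehat{d}^{(i,t)}(\tau)$ with every summand below $1$, and since variable values only grow along the run we get $\min(\widetilde{d}(\tau),1)\ge\min(\widehat{d}^{(i,t)}(\tau),1)=\widehat{d}^{(i,t)}(\tau)$, so $\ell_i(t)\ge\sum_{\tau=i}^{t}\widehat{d}^{(i,t)}(\tau)$ and $\max(0,1-\ell_i(t))\le\widetilde{z}_i(t)$; if it fails, then $\sum_{\tau=i}^{t}\widehat{d}^{(i,t)}(\tau)\ge 1$ forces $\ell_i(t)\ge\sum_{\tau=i}^{t}\min(\widehat{d}^{(i,t)}(\tau),1)\ge 1$ (using $\sum x_\tau\ge1\Rightarrow\sum\min(x_\tau,1)\ge1$), so $\widetilde{z}_i(t)=0$ and $\max(0,1-\ell_i(t))=0$ as well. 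Finally, Theorem~\ref{theroem:competitive-ratio1} gives $P\le(1+\frac{1}{\theta})OPT(\mathbf{s})+O(1/c)$ with $\theta=(1+\frac{1}{c})^{\lfloor c\rfloor}-1$; letting $c\to\infty$ yields $\theta\to e-1$ and hence $\mathbb{E}[J(\mathbf{s},\pi)]/OPT(\mathbf{s})\to e/(e-1)$.

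The step I expect to be the main obstacle is making the coupling in the second paragraph airtight: verifying that the intervals $\bigl[\widetilde{d}_{\text{pre-sum}}(\tau),\widetilde{d}_{\text{sum}}(\tau)\bigr)$ genuinely tile $\bigl[\widetilde{d}_{\text{pre-sum}}(i),\widetilde{d}_{\text{sum}}(t)\bigr)$ without gaps or double counting once the $\min(\cdot,1)$ truncation and the OFF slots are accounted for, and establishing that a uniformly shifted integer lattice misses an interval of length $\ell$ with probability exactly $\max(0,1-\ell)$, including the degenerate regime $\ell\ge1$ where the answer is $0$. A closely related secondary point is the comparison between the truncated final values $\min(\widetilde{d}(\tau),1)$ used by Alg.~\ref{online-alg} and the untruncated beginning-of-iteration values $\widehat{d}^{(i,t)}(\tau)$ that define $\widetilde{z}_i(t)$ in Alg.~\ref{primal-dual-alg}, which is exactly where monotonicity of the primal-dual updates must be invoked.
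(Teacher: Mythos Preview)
Your proposal is correct and follows essentially the same approach as the paper: both reduce to Theorem~\ref{theroem:competitive-ratio1} by bounding the expected flushing cost by $c\sum_t\widetilde{d}(t)$ and the expected per-packet holding cost by $\widetilde{z}_i(t)$, using the same monotonicity-of-$d(\tau)$ argument and the same case split on the condition in Line~\ref{primal-dual-alg:condition}. Your ``uniformly shifted integer lattice'' framing is a clean repackaging of the paper's Lemma~\ref{lemma:flushing-prob} (which instead does a case analysis on $\lfloor\widetilde{d}_{\text{pre-sum}}(t)\rfloor$ versus $\lfloor\widetilde{d}_{\text{sum}}(t)\rfloor$), and the tiling observation you flag as the main obstacle is exactly what the paper invokes when it writes that the probability of a flush during $\{i,\dots,t\}$ equals $\sum_{\tau=i}^{t}\min(\widetilde{d}(\tau),1)$.
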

\begin{proof}
The underlying idea is to show that the expected total cost incurred by Alg.~\ref{online-alg} is less than or equal to the primal objective value computed by Alg.~\ref{primal-dual-alg}. Please see Appendix~\ref{appendix:theorem:online-alg} for detail.
\end{proof}

\begin{remark}
We remark that if the server flushed the queue in each ON slot $t$ with probability of $\min(\widetilde{d}(t),1)$ directly, then the resulting expected total cost would be higher than the primal objective value computed by Alg.~\ref{primal-dual-alg}. Look at the  second case of  the proof of Theorem~\ref{theorem:online-alg} (see Appendix~\ref{appendix:theorem:online-alg}). In that case,  the $i$-th packet under the scheduling algorithm can stay at the queue in slot $t$ with a non-zero probability, yielding a non-zero expected holding cost in slot $t$. Thus, the holding cost for the packet in slot $t$ is higher than the term $\widetilde{z}_i(t)=0$ in the primal objective value computed by Alg.~\ref{primal-dual-alg}.
\end{remark}

\begin{remark}
The competitive ratio of $e/(e-1)$ is  asymptotically optimal. That is because the TCP-ACK problem in \cite{online-compuatation-naor}  is a special case (when $s(t)=1$ for all $t$) of our scheduling problem for the virtual queueing system and the asymptotic optimal competitive ratio for that problem is  $e/(e-1)$. 
\end{remark}

\begin{algorithm}[t]
	\SetAlgoLined 
	\SetKwFunction{Union}{Union}\SetKwFunction{FindCompress}{FindCompress} \SetKwInOut{Input}{input}\SetKwInOut{Output}{output}
	%
	$d_{\text{pre-sum}}, d_{\text{sum}},  d(t)\leftarrow 0$  for all  $t$\; \label{online-alg:initial}
	$\theta \leftarrow (1+\frac{1}{c})^{\lfloor c \rfloor}-1$\tcp*[r]{$\theta$ is a constant.}
	
	Choose a random number $u \in [0,1)$ with the continuous uniform distribution\; \label{online-alg:random}
	\tcc{For each new slot $t=1,  \cdots, T$, do as follows:}
	
	\If{$s(t) =1$}{    
		\For{$i=1$ \KwTo $t$ \label{online:for}}{	
			\If{$\sum_{\tau=i}^{t}d(\tau)<1$}{			
				$d(t)\leftarrow d(t)+ \frac{1}{c}\sum_{\tau=i}^{t} d(\tau)+\frac{1}{\theta \cdot c}$\; 	 \label{online-alg:d1}			
			}
		}
		$d_{\text{pre-sum}} \leftarrow d_{\text{sum}}$\;  \label{online-alg:pre-sum}
		$d_{\text{sum}} \leftarrow d_{\text{sum}}+\min(d(t),1)$\; \label{online-alg:sum}

\tcc{The device decides whether to download the latest information as follows:}
	\uIf{$d_{\text{pre-sum}} \leq u < d_{\text{sum}}$\label{online-alg:condition}}{
	The server decides to flush the queue\; \label{online-alg:flush}
	$u \leftarrow u+1$\;  \label{online-alg:u+1}
	} 
	\Else{
		The server decides to idle\;
	}
	}

	\caption{Randomized online scheduling  algorithm.}
	\label{online-alg}
\end{algorithm}

\section{Numerical studies}

Theorem~\ref{theorem:online-alg}  has analyzed Alg.~\ref{online-alg} in the worst-case scenario; in contrast, this section will investigate Alg.~\ref{online-alg} in  average-case scenarios by computer simulations.  

We run Alg.~\ref{online-alg} for 10,000 slots with $P[s(t)=1]=p$ for a fixed value of $p$ for all $t$. Note that, under the stationary setting, an optimal \textit{offline} scheduling algorithm for minimizing the long-run average cost is of the threshold-type (see \cite{hsu2018age}). Figs.~\ref{fig:cost5} - \ref{fig:cost15} display the time-average cost for the proposed Alg.~\ref{online-alg} and the optimal offline scheduling algorithm. We can observe that the ratio between the average cost of Alg.~\ref{online-alg} and that of the optimal offline scheduling algorithm is at most 1.20 (when $c=5$ and $p=0.9$); moreover, the ratio can even achieve 1.0048 (when $c=5$ and $p=0.2$). The average of the ratios for those results in Figs.~\ref{fig:cost5} - \ref{fig:cost15} is 1.07. In summary,  Alg.~\ref{online-alg} performs much better than what we analyzed in the worst-case scenario (with the expected competitive ratio of 1.58). 

Moreover, we compare Alg.~\ref{online-alg} with an online scheduling algorithm: for each slot~$t$ with $s(t)=1$, the server flushes the queue when $a(t)\geq c$ and idles otherwise. The  idea is to make a decision in a \textit{greedy} way such that the resulting cost in the present slot can be minimized. Figs.~\ref{fig:cost5} - \ref{fig:cost15} also display the time-average cost for the greedy scheduling algorithm. We can observe that the proposed Alg.~\ref{online-alg} significantly outperforms the greedy scheduling algorithm, except for the case when $c=5$ and $p=0.9$. The exception is because the greedy algorithm happens to take the threshold close to the optimal one.

\begin{figure}[!t]
	\centering
	\includegraphics[width=.4\textwidth]{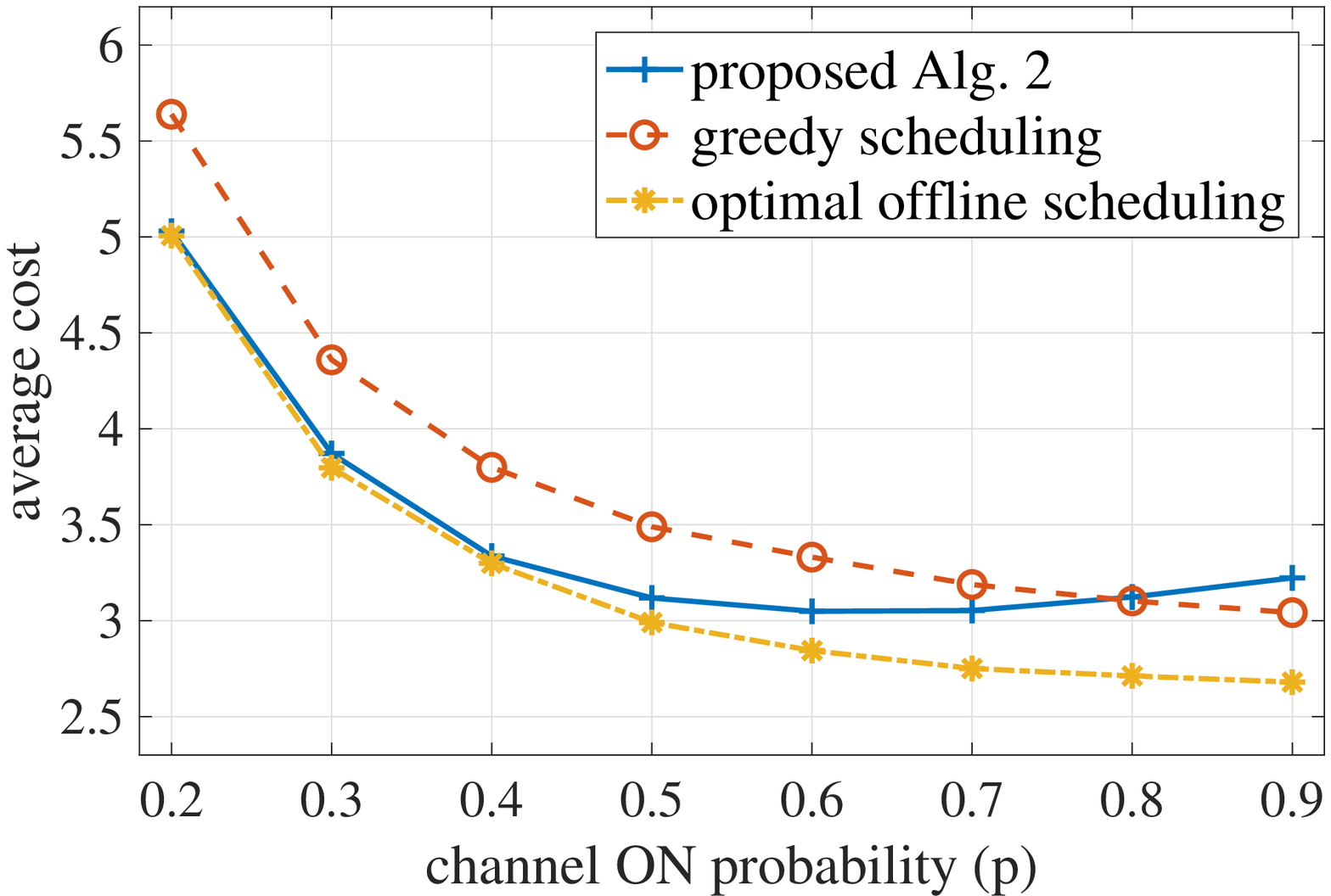}
	\caption{Average costs for different values of $p$ when $c=5$.}
	\label{fig:cost5}
	\includegraphics[width=.4\textwidth]{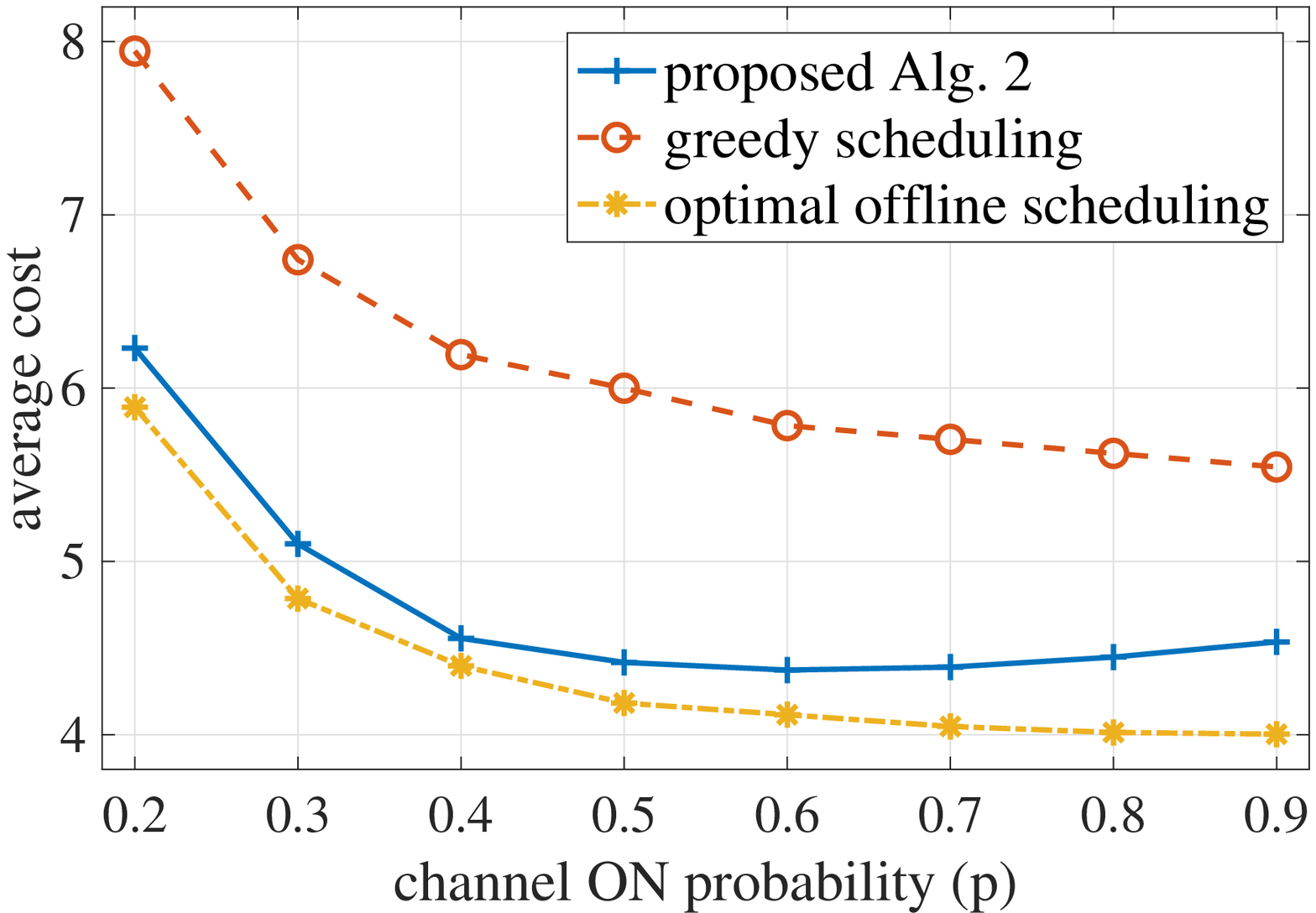}
	\caption{Average costs for different values of $p$ when $c=10$.}
	\label{fig:cost10}
	\includegraphics[width=.4\textwidth]{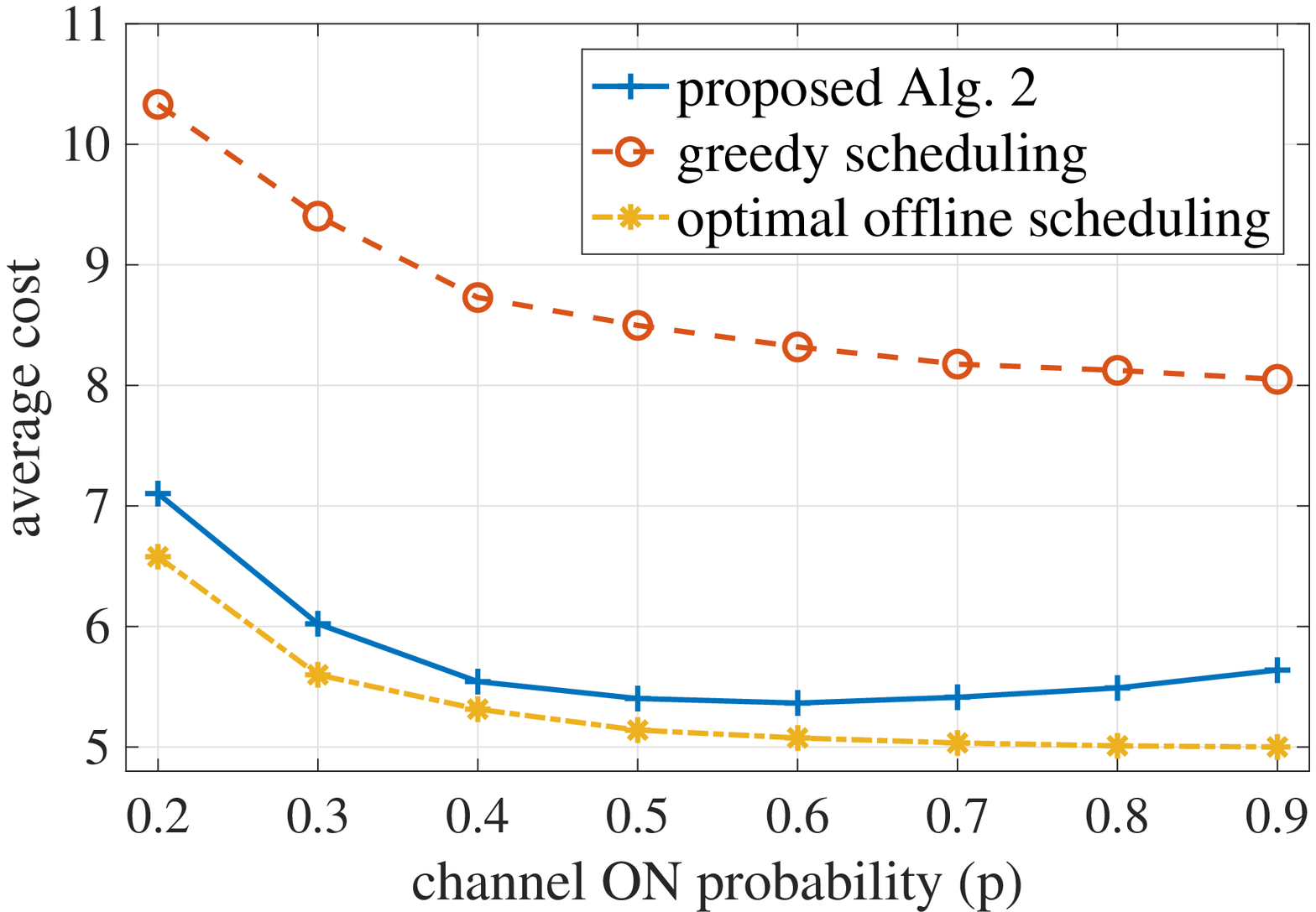}
	\caption{Average costs for different values of $p$ when $c=15$.}
	\label{fig:cost15}
\end{figure}

\section{Extensions} \label{section:extnesion}
In this section, we extend the proposed Alg.~\ref{online-alg} to non-linear age cost functions and dynamic power control. Without loss of generality, we can focus on modifying the primal-dual  algorithm in Alg.~\ref{primal-dual-alg}. 
\subsection{Non-linear age cost functions}
Let $f(a)$ be a function of the age of information, indicating the cost incurred by age $a$. Without loss of generality, we can assume that $f(a) \in \mathbb{N}$; if not, we can scale up the value of $c$ and the age cost function   to make the function values becomes integers. In this context,   $f(a(t-1)+1)-f(a(t-1))$ packets arrive at the virtual queueing system at the beginning of  slot~$t$. Then, we can modify the iterations in both Lines~\ref{primal-dual-alg:for1} and \ref{primal-dual-alg:for2} of Alg.~\ref{primal-dual-alg} to the iterations from $i=1$ to the total number of packets until slot $t$.  With the modification, all the previous feasibility results and the competitive ratio hold. 
\subsection{Dynamic power control}
Next, consider the scenario where  the device can adjust its transmission power according to the present channel quality. We focus on the linear age cost function. Let $c(t)$ be the minimum cost of downloading the latest information in slot~$t$ if $s(t)=1$.  Suppose that at most $m$ power levels can be used and   consider $c(t) \in \{c_1, \cdots, c_m\}$ with $c_1 < \cdots < c_m$. Then, by replacing the $c$ in Line~\ref{primal-dual-alg:d1} of Alg.~\ref{primal-dual-alg} with $c_1$ to make the dual program feasible, we can obtain the same feasibility results while achieving the competitive ratio of
	\begin{align*}
\left(\frac{c_{m}}{c_{1}}\right)\left(1+\frac{1}{(1+\frac{1}{c_{1}})^{\lfloor c_{1} \rfloor}-1}\right)+O\left(\frac{1}{c_1}\right),
\end{align*}
approaching  $e/(e-1)$ again, as $c_{1}$ tends to infinity.

\section{Conclusion}
This paper treated a mobile network where a mobile device is running an application. To realize the mission of the application, the device needs to download the latest information through neighboring access points. The paper proposed a randomized energy-efficient scheduling algorithm for the mobile device to decide whether to download for each slot. In particular, the proposed algorithm enjoys the online feature without the knowledge of the channel dynamics or the time for running the application. The proposed scheduling algorithm can asymptotically achieve the expected competitive ratio of $e/(e-1)$.  Interesting extensions of this work include scheduling for multiple information downloads, imperfect channel estimation, and partial information about the connectivity pattern. 


\appendices

\section{Proof of Lemma~\ref{lemma:primal-feasible}} \label{appendix:lemma:primal-feasible}
The primal constraint in Eq.~(\ref{primal-program:const2}) is obviously true. Next, we verify the primal constraint in Eq.~(\ref{primal-program:const1})  in the following four cases. 
\begin{enumerate}
	\item  If $s(t)=1$ and $\sum_{\tau=i}^{t} \widehat{d}^{(i,t)}(\tau)<1$, then $z_i(t)$ is updated to be $1- \sum_{\tau=i}^{t}\widehat{d}^{(i,t)}(\tau)$ in Line~\ref{primal-dual-alg:z1} of Alg.~\ref{primal-dual-alg}. Therefore, the primal constraint  in Eq. (\ref{primal-program:const1}) holds since
	\begin{align*}
	&\widetilde{z}_i(t) + \sum^t_{\tau=i}\widetilde{d}(\tau)\\ 
	=&  \left(1- \sum_{\tau=i}^{t}\widehat{d}^{(i,t)}(\tau)\right) + \sum^t_{\tau=i}\widetilde{d}(\tau)\\
	=&1+\sum_{\tau=i}^t\left(\widetilde{d}(\tau)-\widehat{d}^{(i,t)}(\tau)\right) \geq 1,
	\end{align*}
	where the last inequality is based on $\widetilde{d}(\tau) \geq \widehat{d}^{(i,t)}(\tau)$ for all $\tau$,  because Alg.~\ref{primal-dual-alg} increases the value of $d(t)$ in Line~\ref{primal-dual-alg:d1} and never decreases it.
	\item   If $s(t)=1$ and $\sum_{\tau=i}^t \widehat{d}^{(i,t)}(\tau) \geq 1$, then the value of $z_i(t)$ is zero. Therefore, the primal constraint holds since
	\begin{align*}
	\widetilde{z}_i(t) + \sum^t_{\tau=i}\widetilde{d}(t)=&  0 + \sum^t_{\tau=i}\widetilde{d}(t)\\
	\geq &\sum_{\tau=i}^t\widehat{d}^{(i,t)}(\tau) \geq 1.
	\end{align*}  
	
	\item If $s(t)=0$ and $i=t$, then $z_t(t)$ is set to be one in Line~\ref{primal-dual-alg:z22} of Alg.~\ref{primal-dual-alg}, agreeing with the primal constraint. 
	\item If $s(t)=0$ and $i < t$, then (according to Line~\ref{primal-dual-alg:z21} of Alg.~\ref{primal-dual-alg}) $\widetilde{z}_i(t)$ is the same as  $\widetilde{z}_i(t')$ with $t'=\max(t'', i)$ where $t''$ is the previous ON slot of slot $t$, i.e., $t'$ is the previous ON slot of slot $t$ after its arriving slot $i$.  Because the primal constraint in slot~$t'$ holds (by the above three cases), the primal constraint in slot~$t$ holds as well. 
\end{enumerate}	
To conclude, the solution produced by Alg.~\ref{primal-dual-alg} satisfies Eq.~(\ref{primal-program:const1}).

\section{Proof of Lemma~\ref{lemma:iterative-bound}}  \label{appendix:lemma:iterative-bound}
We prove the claim in Eq.~(\ref{eq:q-ineqal}) by induction on $q$. First, suppose that the first element  $y_1(t_1)$ in the ordered set $Y(t)$ is updated to be one. Then, the claim in Eq.~(\ref{eq:q-ineqal}) is true when $q=1$ because
\begin{align*}
\sum_{\tau=i}^{t'} \widetilde{d}^{(1,t_1)}(\tau) \geq \widetilde{d}^{(1,t_1)}(t_1) \geq \frac{1}{\theta \cdot c},
\end{align*}
where the last inequality is because the value of $d(t_1)$ increases by as least $1/(\theta \cdot c)$ in Line~\ref{primal-dual-alg:d1} of Alg.~\ref{primal-dual-alg}.

Second, assume that $y_{j'}(\xi')$, for some $j'$ and $\xi'$,  is the \mbox{$(q-1)$-th} element in $Y(t)$.  Suppose that  $y_{j'}(\xi')$ is updated to be one at the end of $j'$-th iteration of slot $\xi'$ and  the claim in Eq.~(\ref{eq:q-ineqal}) holds. 

Then, we consider the $q$-th element in the following. Assume that $y_j(\xi)$ is the $q$-th element in $Y(t)$ and it is updated to be one at the end of $j$-th iteration of slot $\xi$. In that iteration, variable $d(\xi)$ is updated according to  Line~\ref{primal-dual-alg:d1} of Alg.~\ref{primal-dual-alg}, and hence we can obtain
\begin{align}
\widetilde{d}^{(j,\xi)}(\xi)=&\widehat{d}^{(j,\xi)}(\xi)+ \frac{1}{c}\sum_{\tau=j}^{\xi} \widehat{d}^{(j,\xi)}(\tau) +\frac{1}{\theta \cdot c} \nonumber\\
\mathop{\geq}^{(a)} & \widetilde{d}^{(j',\xi')}(\xi) +\frac{1}{c}\sum_{\tau=j}^{\xi} \widetilde{d}^{(j',\xi')}(\tau)+\frac{1}{\theta \cdot c} \nonumber \\
\mathop{\geq}^{(b)} & \widetilde{d}^{(j',\xi')}(\xi) +\frac{1}{c}\sum_{\tau=t}^{\xi} \widetilde{d}^{(j',\xi')}(\tau)+\frac{1}{\theta \cdot c},   \label{eq:d-inequal}
\end{align}
where  (a) is due to the value of $d(\tau)$, for all $\tau$, at beginning of the $j$-th iteration of slot $\xi$ is greater than or equal to that at the end of $j'$-th iteration of slot $\xi'$; (b) is due to $j \leq t$.  Then, we can establish the claim in Eq.~(\ref{eq:q-ineqal}) by
\begin{align*}
\sum_{\tau=i}^{t'} \widetilde{d}^{(j,\xi)}(\tau) \mathop{\geq}^{(a)}& \sum_{\tau=t}^{\xi} \widetilde{d}^{(j,\xi)}(\tau)\\
\geq& \sum_{\tau=t}^{\xi} \widetilde{d}^{(j',\xi')}(\tau)+\left(\widetilde{d}^{(j,\xi)}(\xi)-\widetilde{d}^{(j',\xi')}(\xi)\right)\\
\mathop{\geq}^{(b)}&\sum_{\tau=t}^{\xi} \widetilde{d}^{(j',\xi')}(\tau)+\left(\frac{1}{c}\sum_{\tau=t}^{\xi} \widetilde{d}^{(j',\xi')}(\tau)+\frac{1}{\theta\cdot c}\right)\\
\mathop{\geq}^{(c)}& \left(1+\frac{1}{c}\right) \cdot \frac{(1+\frac{1}{c})^{q-1}-1}{\theta} +\frac{1}{\theta \cdot c}\\
=&\frac{(1+\frac{1}{c})^{q}-1}{\theta},
\end{align*}   
where (a) is due to $i \leq t$ and $t'\geq \xi$; (b) is based on the inequality in Eq.~(\ref{eq:d-inequal}); (c) results from the induction hypothesis for $q-1$.

\section{Proof of Theorem~\ref{theroem:competitive-ratio1}} \label{appendix:theroem:competitive-ratio1}

	First,  consider the case  when $\widetilde{y}_i(t)=1$ for all $i\leq t$ and all~$t$ with $s(t)=0$.    Then, we derive $\Delta P_i(t)$ and $\Delta D_i(t)$ in the following four cases:
\begin{enumerate}
	\item If $s(t)=1$ and $\sum_{\tau=i}^{t}\widehat{d}^{(i,t)}(\tau)<1$, then according to Lines~\ref{primal-dual-alg:z1} and \ref{primal-dual-alg:d1}, $\Delta P_i(t)$ can be expressed as
	\begin{align*}
	\Delta P_i(t)=& c \left(\widetilde{d}^{(i,t)}(t)-\widehat{d}^{(i,t)}(t)\right)+\widetilde{z}_i(t)\\
	=&c\left(\frac{1}{c} \sum_{\tau=i}^{t} \widehat{d}^{(i,t)}(\tau)+\frac{1}{\theta\cdot c}\right) \\
	&+\left(1- \sum_{\tau=i}^{t}\widehat{d}^{(i,t)}(\tau)\right)=1+\frac{1}{\theta}.
	\end{align*} 
	Moreover, $\Delta D_i(t)=\widetilde{y}_i(t)=1$ according to Line~\ref{primal-dual-alg:y1}. 
	\item  If $s(t)=1$ and $\sum_{\tau=i}^{t} \widehat{d}^{(i,t)}(\tau)\geq1$, then $\Delta P_i(t)=0$ and $\Delta D_i(t)=0$ because all the variables keep unchanged. 
	\item If $s(t)=0$ and $i < t$, then $\Delta P_i(t)=\widetilde{z}_i(t) <1$ and $\Delta D_i(t)=\widetilde{y}_i(t)=1$.
	\item If $s(t)=0$ and $i = t$, then $\Delta P_i(t)=\widetilde{z}_i(t) = 1$, and $\Delta D_i(t)=\widetilde{y}_i(t)=1$.  
\end{enumerate}     
The above four cases can conclude that $$\Delta P_i(t) \leq (1+\frac{1}{\theta}) \Delta D_i(t),$$ for all $i$ and $t$.  Therefore, we can bound the primal objective value by 
\begin{align}
P \leq (1+\frac{1}{\theta}) D \leq (1+\frac{1}{\theta}) OPT(\mathbf{s}), \label{eq:pimal-ineq}
\end{align}
where the last inequality is due to the weak duality \cite{online-compuatation-naor}. Finally, we can conclude that $P \leq \frac{e}{e-1} OPT(\mathbf{s})$ in this case, when $c$ goes to infinity. 

Second,  consider the case when there exists an $i\leq t$ and a $t$ with $s(t)=0$, such that $\widetilde{y}_i(t)=0$. According to the condition in Line~\ref{primal-dual-alg:fix-not-tight}, the  $y_i(t)$ is tight, resulting in $D \geq c$.  Therefore, we can obtain $OPT(\mathbf{s}) \geq D \geq c$. By $P_{\text{ON}}$ and $P_{\text{OFF}}$  we denote the primal  objective values  computed by Alg.~\ref{primal-dual-alg} for the ON slots and the OFF slots, respectively.  Then, we have $P_{\text{ON}} \leq (1+\frac{1}{\theta}) OPT(\mathbf{s})$ similar to Eq.~(\ref{eq:pimal-ineq}). Moreover, we can bound $P_{\text{OFF}}$ by
	\begin{align*}
	P_{\text{OFF}} =&\sum_{t\leq T: s(t)=0} \sum_{i=1}^t\widetilde{z}_i(t) \\
	\mathop{\leq}^{(a)}& \sum_{t=1}^T t =\frac{T(T+1)}{2} \\
	\mathop{\leq}^{(b)}& \frac{T(T+1)}{2} \frac{OPT(\mathbf{s})}{c},
	\end{align*}
	where (a) is due to $\widetilde{z}_i(t) \leq 1$ for all $i$ and $t$; (b) is due to $ OPT(\mathbf{s}) \geq  c$ in this case. Then, we can bound the primal objective value by 
	\begin{align*}
	P=P_{\text{ON}}+P_{\text{OFF}} \leq (1+\frac{1}{\theta}) OPT + \frac{T(T+1)}{2c} OPT(\mathbf{s}).
	\end{align*}
	Finally, we can also conclude that $P \leq \frac{e}{e-1} OPT(\mathbf{s})$  (as the second term approaches zero)  in this case, when $c$ goes to infinity.

\section{Proof of Theorem~\ref{theorem:online-alg}} \label{appendix:theorem:online-alg}
The proof needs the following technical lemma.

\begin{lemma} \label{lemma:flushing-prob}
The server running Alg.~\ref{online-alg} flushes the queue in slot~$t$ with probability of $\min(\widetilde{d}(t),1)$.
\end{lemma}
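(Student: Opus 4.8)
The plan is to follow the value of the working variable $u$ through the execution of Alg.~\ref{online-alg}. A slot $t$ with $s(t)=0$ is never flushed and has $\widetilde d(t)=0$, so the statement is trivial there; hence I restrict attention to the ON slots, which Alg.~\ref{online-alg} processes in increasing order, say $t_1<t_2<\cdots$. Write $u_0\in[0,1)$ for the number drawn in Line~\ref{online-alg:random}, and let $U_t$ denote the value of $u$ at the instant the test in Line~\ref{online-alg:condition} is performed for an ON slot $t$. Since $u$ is only ever modified by the increment in Line~\ref{online-alg:u+1}, we have $U_t=u_0+k_t$ where $k_t$ is the number of flushes occurring strictly before slot $t$; in particular $U_t\in u_0+\mathbb Z$. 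Also recall that $\widetilde d_{\text{pre-sum}}(t)$ equals $\widetilde d_{\text{sum}}(t')$ for the preceding ON slot $t'$, and that $\widetilde d_{\text{sum}}(t)-\widetilde d_{\text{pre-sum}}(t)=\min(\widetilde d(t),1)\le 1$, by Lines~\ref{online-alg:pre-sum}--\ref{online-alg:sum}.

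I would then prove, by induction along $t_1,t_2,\dots$, the invariant
\begin{align*}
\widetilde d_{\text{pre-sum}}(t)\;\le\; U_t\;<\;\widetilde d_{\text{pre-sum}}(t)+1 .
\end{align*}
For $t=t_1$ it holds since $U_{t_1}=u_0\in[0,1)$ and $\widetilde d_{\text{pre-sum}}(t_1)=0$. For the step, note that, given the lower bound, slot $t$ is flushed iff $U_t<\widetilde d_{\text{sum}}(t)$. If it is flushed, then $U_{t'}=U_t+1$ at the next ON slot $t'$, and $\widetilde d_{\text{pre-sum}}(t')=\widetilde d_{\text{sum}}(t)\le U_t+1$ (using $\widetilde d_{\text{sum}}(t)-U_t\le\widetilde d_{\text{sum}}(t)-\widetilde d_{\text{pre-sum}}(t)\le 1$), while $U_t+1<\widetilde d_{\text{sum}}(t)+1=\widetilde d_{\text{pre-sum}}(t')+1$. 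If it is not flushed, then $U_{t'}=U_t\ge\widetilde d_{\text{sum}}(t)=\widetilde d_{\text{pre-sum}}(t')$ and $U_{t'}=U_t<\widetilde d_{\text{pre-sum}}(t)+1\le\widetilde d_{\text{sum}}(t)+1=\widetilde d_{\text{pre-sum}}(t')+1$. Either way the invariant is restored.

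Finally, the invariant says that $U_t$ is the unique member of the coset $u_0+\mathbb Z$ lying in the unit window $[\widetilde d_{\text{pre-sum}}(t),\widetilde d_{\text{pre-sum}}(t)+1)$, that is, $U_t=\widetilde d_{\text{pre-sum}}(t)+\big((u_0-\widetilde d_{\text{pre-sum}}(t))\bmod 1\big)$. The map $u_0\mapsto(u_0-\widetilde d_{\text{pre-sum}}(t))\bmod 1$ is a measure-preserving bijection of $[0,1)$ (it splits $[0,1)$ into two subintervals which are swapped by a shift), so with $u_0$ uniform on $[0,1)$ the quantity $U_t$ is uniform on $[\widetilde d_{\text{pre-sum}}(t),\widetilde d_{\text{pre-sum}}(t)+1)$. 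Since slot $t$ is flushed precisely when $U_t\in[\widetilde d_{\text{pre-sum}}(t),\widetilde d_{\text{sum}}(t))$, a sub-interval of length $\min(\widetilde d(t),1)$, the flushing probability is exactly $\min(\widetilde d(t),1)$. The one delicate point is the inductive maintenance of the window — specifically, that a flush advances $U_t$ by one \emph{into the next slot's window} — which is where the cap $\min(\widetilde d(t),1)\le 1$ (equivalently, each increment of $\widetilde d_{\text{sum}}$ being at most one) is indispensable; the concluding change of variables is then routine.
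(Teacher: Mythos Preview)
Your proof is correct. The paper takes a related but differently organized route: it asserts (without an explicit inductive justification) that the flushing event in slot $t$ coincides with ``there exists $k\in\mathbb{N}$ with $u_0+k\in[\widetilde d_{\text{pre-sum}}(t),\widetilde d_{\text{sum}}(t))$'', and then computes the probability by a case split on whether $\lfloor\widetilde d_{\text{pre-sum}}(t)\rfloor=\lfloor\widetilde d_{\text{sum}}(t)\rfloor$ or $\lfloor\widetilde d_{\text{pre-sum}}(t)\rfloor<\lfloor\widetilde d_{\text{sum}}(t)\rfloor$, in each case identifying the one or two admissible values of $k$ and summing the corresponding interval lengths. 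Your inductive invariant $\widetilde d_{\text{pre-sum}}(t)\le U_t<\widetilde d_{\text{pre-sum}}(t)+1$ is precisely what certifies the paper's asserted characterization, so your argument is more self-contained on that point; and your concluding step---uniformity of $U_t$ on the unit window via the measure-preserving rotation $u_0\mapsto(u_0-\widetilde d_{\text{pre-sum}}(t))\bmod 1$---replaces the floor case analysis with a single conceptual stroke. The paper's computation is more bare-handed; yours makes transparent why the cap $\min(\widetilde d(t),1)$ in Line~\ref{online-alg:sum} is exactly what prevents the window from overflowing in the inductive step.
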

\begin{proof}
First, if $s(t)=0$, then the probability of flushing in slot~$t$ is $\widetilde{d}(t)=0$.  Second, consider slot $t$ with $s(t)=1$. According to Lines~\ref{online-alg:condition} - \ref{online-alg:u+1} of Alg.~\ref{online-alg}, the server decides to flush in slot~$t$ if there exists a $k \in \mathbb{N}$ such that $u+k \in [\widetilde{d}_{\text{pre-sum}}(t), \widetilde{d}_{\text{sum}}(t))$. Let $U$ be the continuous uniform random variable whose value is between 0 and 1. Then, under Alg.~\ref{online-alg} the probability of flushing in slot $t$ can be derived in the following two cases.
\begin{enumerate}
	\item If $\lfloor\widetilde{d}_{\text{pre-sum}}(t)\rfloor = \lfloor \widetilde{d}_{\text{sum}}(t) \rfloor$, then $\lfloor \widetilde{d}_{\text{pre-sum}}(t) \rfloor$ is the only one value of $k$ such that $u+k$ can belong to $[\widetilde{d}_{\text{pre-sum}}(t), \widetilde{d}_{\text{sum}}(t))$. Then, the probability of flushing in slot $t$ is
	\begin{align*}
	&P[U+ \lfloor \widetilde{d}_{\text{pre-sum}}(t) \rfloor \in [\widetilde{d}_{\text{pre-sum}}(t), \widetilde{d}_{\text{sum}}(t))]\\
	=&P[\widetilde{d}_{\text{pre-sum}}(t) \leq U+ \lfloor \widetilde{d}_{\text{pre-sum}}(t) \rfloor < \widetilde{d}_{\text{sum}}(t)]\\
	=&P[\widetilde{d}_{\text{pre-sum}}(t) - \lfloor\widetilde{d}_{\text{pre-sum}}(t)\rfloor \leq U \\
	&\hspace{.5cm}< \widetilde{d}_{\text{sum}}(t) - \lfloor \widetilde{d}_{\text{pre-sum}}(t) \rfloor]\\
	=&\widetilde{d}_{\text{sum}}(t) - \widetilde{d}_{\text{pre-sum}}(t) = \min(\widetilde{d}(t),1).
	\end{align*}
	\item If $\lfloor \widetilde{d}_{\text{pre-sum}}(t) \rfloor < \lfloor \widetilde{d}_{\text{sum}}(t) \rfloor$, then similar to the first case the probability of flushing in slot $t$ is
	\begin{align*}
	&P[U+\lfloor \widetilde{d}_{\text{pre-sum}}(t)\rfloor \in [ \widetilde{d}_{\text{pre-sum}}(t), \lfloor \widetilde{d}_{\text{sum}}(t) \rfloor) \\
	&\hspace{.5cm}\text{or\,\,\,} U+\lfloor \widetilde{d}_{\text{sum}}(t) \rfloor \in[\lfloor \widetilde{d}_{\text{sum}}(t) \rfloor, \widetilde{d}_{\text{sum}}(t))]\\
	=&\left(\lfloor \widetilde{d}_{\text{sum}}(t) \rfloor - \widetilde{d}_{\text{pre-sum}}(t)\right)+\left(\widetilde{d}_{\text{sum}}(t) -\lfloor \widetilde{d}_{\text{sum}}(t) \rfloor\right)\\
	=&\min(\widetilde{d}(t),1).
	\end{align*}
\end{enumerate}	
Then, we complete the proof.
\end{proof}

We proceed to compare the expected  total cost incurred by Alg.~\ref{online-alg} with the primal objective value in Eq.~(\ref{primal-program:objective})  computed by Alg.~\ref{primal-dual-alg}. First, according to Lemma~\ref{lemma:flushing-prob},  the expected number of flushing  in slot~$t$ (under Alg.~\ref{online-alg}) is $\min(\widetilde{d}(t),1)$. Therefore, the expected flushing cost incurred by Alg.~\ref{online-alg} in  slot~$t$ is \mbox{$c\cdot \min(\widetilde{d}(t),1)$}, which is less than or equal to the term $c \cdot \widetilde{d}(t)$ in the primal objective value computed by Alg.~\ref{primal-dual-alg}.

Second, we show that the expected  cost incurred by Alg.~\ref{online-alg} for holding the $i$-th packet in slot $t$, for all $i$ and $t$, is less than or equal to the term $\widetilde{z}_i(t)$ in the primal objective value computed by Alg.~\ref{primal-dual-alg}:
\begin{enumerate}
	\item If $s(t)=1$ and $\sum_{\tau=i}^{t} \widehat{d}^{(i,t)}(\tau)<1$,  then (similar  to Lemma~\ref{lemma:flushing-prob})  the server running Alg.~\ref{online-alg}  flushes the queue during the period from slot $i$ until $t$ with probability of $\sum_{\tau=i}^t \min(\widetilde{d}(\tau),1)$. Therefore, under Alg.~\ref{online-alg} the expected number of the $i$-th packet left in the virtual queueing system in slot~$t$ is 
	\begin{align*}
	1-\sum_{\tau=i}^t \min(\widetilde{d}(\tau),1) \leq 1-\sum_{\tau=i}^{t}\widehat{d}^{(i,t)}(\tau)=\widetilde{z}_i(t),
	\end{align*}	
	where the inequality is because  in Alg.~\ref{primal-dual-alg} the value of  $d(\tau)$, for all~$\tau$, at the beginning of the $i$-th iteration of slot~$t$  is  less than or equal to that at the end of slot~$t$; additionally, it is less than one in this case. Thus, the expected  cost incurred by Alg.~\ref{online-alg} for holding the $i$-th packet in slot $t$ is less than or equal to the term $\widetilde{z}_i(t)$ in the primal objective value computed by Alg.~\ref{primal-dual-alg}.
	\item  If $s(t)=1$ and $\sum_{\tau=i}^{t} \widehat{d}^{(i,t)}(\tau)\geq1$, then
	under Alg.~\ref{online-alg} the $i$-th packet must be flushed by slot $t$ since in this case there must exist a $k \in \mathbb{N}$ such that $u+k \in [\widetilde{d}_{\text{pre-sum}}(i), \widetilde{d}_{\text{sum}}(t))$ (similar to  Lemma~\ref{lemma:flushing-prob}). Thus, the   cost incurred by Alg.~\ref{online-alg} for holding the $i$-th packet in slot~$t$ is zero, which is the same as $\widetilde{z}_i(t)$ produced by Alg.~\ref{primal-dual-alg}.
	\item If $s(t)=0$ and $i = t$, then under Alg.~\ref{online-alg} the $i$-th packet stays at the queue in slot $t$, yielding one unit of the holding cost. The holding cost is the same as $\widetilde{z}_t(t)$ produced by Alg.~\ref{primal-dual-alg}.
	\item If $s(t)=0$ and $i < t$,  then $\widetilde{z}_i(t)=\widetilde{z}_i(t')$ with the slot~$t'$ in the fourth case of Appendix~\ref{appendix:lemma:primal-feasible}; moreover, 	the expected  cost incurred by Alg.~\ref{online-alg} for holding the $i$-th packet in slot $t$ is the same as that  in  slot~$t'$. Since the expected holding cost for the  packet  in slot~$t'$ is less than or equal to $\widetilde{z}_i(t')$ (according to the above three cases), the holding cost for the packet in slot $t$ is less than or equal to $\widetilde{z}_i(t)$ as well. 
\end{enumerate}

We then can conclude that the expected total cost incurred by Alg.~\ref{online-alg} is less than or equal to the primal objective value computed by Alg.~\ref{primal-dual-alg}.  The result stated in the theorem immediately follows from Theorem~\ref{theroem:competitive-ratio1}.

{\small
	\bibliographystyle{IEEEtran}
	\bibliography{IEEEabrv,ref}
}

\end{document}